\documentclass[onecolumn]{IEEEtran}
\hyphenation{op-tical net-works semi-conduc-tor}
\usepackage[margin=1.5in]{geometry}
\usepackage{amsmath,amsthm,thmtools,mathtools,amsfonts}
\usepackage{tikz}
\usepackage{qtree}
\usepackage{url,hyperref}
\usepackage{comment}
\interdisplaylinepenalty=0
\newcommand{\seq}[2][]{\boldsymbol{#2}_{#1}}
\newcommand{\reducesto}{\,{\xrightarrow{dd}}\allowbreak\,}

\usepackage{mleftright}\mleftright
\theoremstyle{plain}
\newtheorem{thm}{\protect\theoremname}
\theoremstyle{plain}
\newtheorem{lem}[thm]{\protect\lemmaname}
\theoremstyle{plain}
\newtheorem{cor}[thm]{\protect\corollaryname}
\theoremstyle{remark}

\theoremstyle{definition}
\newtheorem{exm}[thm]{\protect\examplename}
\theoremstyle{definition}
\newtheorem{defn}[thm]{\protect\definitionname}
\theoremstyle{plain}
\newtheorem{claim}[thm]{\protect\claimname}
\theoremstyle{plain}
\newtheorem*{claim*}{\protect\claimname}

\providecommand{\claimname}{Claim}
\providecommand{\corollaryname}{Corollary}
\providecommand{\definitionname}{Definition}
\providecommand{\lemmaname}{Lemma}
\providecommand{\remarkname}{Remark}
\providecommand{\theoremname}{Theorem}
\providecommand{\examplename}{Example}

\begin{document}

\newcommand{\dupsto}{\,{\xrightarrow{d}}\allowbreak\,}
\renewcommand{\root}{\operatorname{root}}

\title{Duplication Distance to the Root\\ for Binary Sequences}
\date{}
\author{Noga~Alon, Jehoshua Bruck,~\IEEEmembership{Fellow,~IEEE, }Farzad~Farnoud,~\IEEEmembership{Member,~IEEE,} and~Siddharth~Jain,~\IEEEmembership{Student Member,~IEEE}%
\thanks{Noga Alon is with the Schools of Mathematics and Computer Science, Tel Aviv University, Tel Aviv 6997801, Israel, Email: \href{mailto:nogaa@post.tau.ac.il}{nogaa@post.tau.ac.il}.}%
\thanks{Jehoshua Bruck is with the Electrical Engineering Department, California Institute of Technology, Pasadena, CA, 91125, Email: \href{mailto:bruck@caltech.edu}{bruck@caltech.edu}.}%
\thanks{Farzad Farnoud is with the Department of Electrical and Computer Engineering, University of Virginia, Charlottesville, VA, 22903, Email: \href{mailto:farzad@virginia.edu}{farzad@virginia.edu}. He was with the Electrical Engineering Department, California Institute of Technology.}%
\thanks{Siddharth Jain is with the Electrical Engineering Department, California Institute of Technology, Pasadena, CA, 91125, Email: \href{mailto:sidjain@caltech.edu}{sidjain@caltech.edu}.}
\thanks{This paper was presented in part at 2016 IEEE International Symposium on Information Theory in Barcelona, Spain.}}

\maketitle
\begin{abstract}
We study the tandem duplication distance between binary sequences and their roots. In other words, the quantity of interest is the number of tandem duplication operations of the form $\seq x = \seq a \seq b \seq c \to \seq y = \seq a \seq b \seq b \seq c$, where $\seq x$ and $\seq y$ are sequences and $\seq a$, $\seq b$, and $\seq c$ are their substrings,  needed to generate a binary sequence of length $n$ starting from a square-free sequence from the set $\{0,1,01,10,010,101\}$. This problem is a restricted case of finding the duplication/deduplication distance between two sequences, defined as the minimum number of duplication and deduplication operations required to transform one sequence to the other. We consider both exact and approximate tandem duplications. For exact duplication, denoting the maximum distance to the root of a sequence of length $n$ by $f(n)$, we prove that $f(n)=\Theta(n)$. For the case of approximate duplication, where a $\beta$-fraction of symbols may be duplicated incorrectly, we show that the maximum distance has a sharp transition from linear in $n$ to logarithmic at $\beta=1/2$. We also study the duplication distance to the root for sequences with a given root and for special classes of sequences, namely, the de Bruijn sequences, the Thue-Morse sequence, and the Fibbonaci words. The problem is motivated by genomic tandem duplication mutations and the smallest number of tandem duplication events required to generate a given biological sequence.
\end{abstract}

\section{Introduction}

The genome of every organism is subject to mutations resulting from imperfect genome replication  as well as environmental factors. These mutations include \emph{tandem duplications}, which  create \emph{tandem repeats} by duplicating a  substring and inserting it adjacent to the original (e.g., $A\underline{CG}T\to A\underline{CGCG}T$);  and \emph{point mutations} or \emph{substitutions}, which substitute one base in the sequence by another (e.g., $A\underline{C}GT\to A\underline{T}GT$). Gaining a better understanding of mutations that modify genomes --thereby creating the variety needed for natural selection-- is helpful in many fields including phylogenomics, systems biology, medicine, and bioinformatics. 

One aspect of this task is the study of how genomic sequences are generated through mutations. In this paper, we focus on tandem duplication mutations and tandem repeats, which form about $3\%$ of the human genome~\cite{lander2001initial}, and study the minimum number of mutation events that can create a given sequence. More specifically, we define distance measures between pairs of sequences based on the number of exact or approximate tandem duplications that are needed to transform one sequence to the other. We then study the distances between sequences of length $n\in\mathbb N$ and their roots, i.e., the shortest sequences from which they can be obtained via these operations.

Formally, a (\emph{tandem}) \emph{repeat of length $h$} in a sequence is two identical adjacent blocks, each consisting of $h$ consecutive elements. For example, the sequence $12\underline{134134}51$ contains the repeat $134134$ of length $3$. A repeat of length $h$ may be created through a (\emph{tandem}) \emph{duplication of length $h$}, e.g., $1213451\dupsto1213413451$, where $\dupsto$ denotes a duplication operation. On the other hand, a repeat may be removed through a (\emph{tandem}) \emph{deduplication of length $h$}, i.e., by removing one of the two adjacent identical blocks, e.g., $1213413451\reducesto1213451$. 

The \emph{duplication/deduplication distance} between two sequences $\seq x$ and $\seq y$ is the smallest number of duplications and deduplications that can turn $\seq x$ into $\seq y$ (to denote sequences we use bold symbols). We set the distance to infinity if the task is not possible, for example, if $\seq x=1$ and $\seq y = 0$.

For two sequences $\seq x$ and $\seq y$, if $\seq y$ can be obtained from $\seq x$ through duplications, we say that $\seq x$ is an \emph{ancestor} of $\seq y$ and that $\seq y$ is a \emph{descendant} of $\seq x$.  An ancestor $\seq x$ of $\seq y$ is a \emph{root} of $\seq y$, denoted $\seq x = \root(\seq y)$, if it is \emph{square-free}, i.e., it does not contain a repeat. We define the  \emph{duplication distance} between two sequences as the minimum number of duplications required to convert the shorter sequence to the longer one.\footnote{Note that using the term distance here is a slight abuse of notation as the duplication distance does not satisfy the triangle inequality.} This distance is finite if and only if one sequence is the ancestor of the other. This paper is focused on finding bounds on the duplication distance of sequences to their roots. From an evolutionary point of view, the duplication distance between a sequence and its root is of interest since it corresponds to a likely path through which a root may have evolved into a sequence present in the genome of an organism.

 Our attention here is limited to binary sequences for the sake of simplicity, since for the binary alphabet, the root of every sequence is unique and belongs to the set $\{0,1,01,10,010,101\}$. Specifically, the roots of $0^n$ and $1^n$, $n\in \mathbb N$, are $0$ and $1$, respectively. For every other binary sequence $\seq s$ of length $n$, the root of $\seq s$ is the sequence in the set $\{01,10,010,101\}$ that starts and ends with the same symbols as $\seq s$. For example, the root of $\seq s=1001011$ is $101$ since 
\[101\dupsto\underline{1010}1\dupsto1010\underline{11}\dupsto1\underline{00}1011=\seq s.\]  
A \emph{run} in a sequence is a maximal substring consisting of one or more copies of a single symbol. Through duplication, we can generate every binary sequence from its root by first creating the correct number of runs of appropriate symbols. For example, since $\seq s=1001011$ has $5$ runs, the first being a run of the symbol $1$, we first generate $10101$ through duplication. It is not difficult to see that this is always possible. The next step is then to extend each run so that it has the appropriate length. 

In the proofs in the paper, it is often helpful to think of the distance to the root in terms of converting a sequence to its root via a sequence of  deduplications, e.g. the sequence $\seq s$ above can be \emph{deduplicated to} its root as 
 $$\seq s =1\underline{00}1011\reducesto1010\underline{11}\reducesto\underline{1010}1\reducesto101=\root(\seq s).$$
 
 We note that a celebrated result by Thue from 1906~\cite{thue1906} states that for alphabets of size $\ge3$, there is an infinite square-free sequence. Thus, in contrast to the binary alphabet, the set of roots for such alphabets is infinite since each substring of Thue's sequence is square-free.
 
For a binary sequence $\seq s$, let $f(\seq s)$ denote the duplication distance between $\seq s$ and its root and let $f(n)$ be the maximum of $f(\seq s)$ for all sequences $\seq s$ of length $n$. Table~\ref{tab:f(n)}, which was obtained through computer search, shows the values of $f(n)$ for $1\le n\le32$.

In this paper, we provide bounds on $f(\seq s)$ and on $f(n)$. We also consider a variation of the duplication distance, referred to as the \emph{approximate-duplication distance}, where the duplication process is imprecise and so the inserted block is not necessarily an exact copy. Specifically, the \emph{$\beta$-approximate-duplication distance} between two sequences $\seq x$ and $\seq y$ is the smallest number of duplications that can turn the shorter sequence into the longer one, where each duplication may produce a block that differs from the original in at most a $\beta$-fraction of positions. This distance between $\seq s$ and any of its roots is denoted by $f_\beta(\seq s)$ and the maximum of $f_\beta(\seq s)$ over all sequences $\seq s$ of length $n$ is denoted by $f_\beta(n)$. We provide bounds on $f_\beta(n)$ and in particular show that there is a sharp transition in the behavior of $f_\beta$ at $\beta=1/2$.

\begin{table}
\centering{}%
\begin{tabular}{|c||c|c|c|c|c|c|c|c|c|c|c|c|c|c|c|c|}
\hline 
$n$ & 1 & 2 & 3 & 4 & 5 & 6 & 7 & 8 & 9 & 10 & 11 & 12 & 13 & 14 & 15 & 16\tabularnewline
\hline 
$f(n)$ & 0 & 1 & 2 & 2 & 3 & 4 & 4 & 5 & 6 & 6 & 7 & 7 & 8 & 8 & 9 & 9\tabularnewline
\hline 
\hline 
$n$ & 17 & 18 & 19 & 20 & 21 & 22 & 23 & 24 & 25 & 26 & 27 & 28 & 29 & 30 & 31 & 32\tabularnewline
\hline 
$f(n)$ & 10 & 10 & 11 & 11 & 11 & 12 & 12 & 12 & 13 & 13 & 13 & 14 & 14 & 14 & 15 & 15\tabularnewline
\hline 
\end{tabular}
\caption{$f(n)$ for $1\le n\le32$\label{tab:f(n)}.}
\end{table}

Since each binary sequence has a unique root in the set $\left\{ 0,1,01,10,010,101\right\}$, the set of sequences can be partitioned based on their roots. In the paper, we also study the duplication distance to the root for sequences based on the part they belong to, that is, we consider $f_{\seq{\sigma}}(n)$ for $\seq{\sigma}\in\left\{ 0,1,01,10,010,101\right\}$, where $f_{\seq{\sigma}}(n)=\max\left\{ f(\seq{s}):\root(\seq{s})=\sigma,|\seq s|=n\right\}$.

\vspace{.3cm}
The rest of the paper is structured as follows. In the next two subsections, we summarize the results of the paper and describe the related work. Then, in Section \ref{sec:genBounds}, we prove the bounds on $f(n)$ and discuss some variants, as well as special classes of sequences. In Section~\ref{sec:Lsystems}, 
 we discuss duplication distance for special class of sequence generating systems called
Lindenmayer Systems. In Sections~\ref{sec:mismatch} and~\ref{sec:stringSystems}, we study the approximate-duplication distance to the root and the duplication distance for different roots, respectively. Finally, several open problems and possible future directions are presented in Section~\ref{sec:conc}.

\subsection{Results}

In this subsection, we present the main results of the paper. The proofs, unless they are very short, are postponed to later sections.

Suppose the root of $\seq s$ is $\seq{\sigma}\in\left\{ 0,1,01,10,010,101\right\} $. It is easy to see that 
\[
\log\frac{|\seq{s}|}{|\seq{\sigma}|}\le f(\seq{s})\le|\seq{s}|.
\]
While the above lower bound is tight in the sense that there exist $\seq{\sigma}$ and $\seq{s}$ that satisfy it with equality, e.g., $\seq s=0^{2^k}$ and $\seq\sigma=0$, we show there is a positive constant $c$ such that for most sequences of length $n$, the duplication distance to the root is bounded below by $cn$. We also improve the upper bound.
\begin{restatable}{thm}{thmbounds}
	\label{thm:bounds}The limit $\lim_{n\to\infty}f(n)/n$
	exists and
	\[
	0.045\le \lim_{n\to\infty}\frac{f(n)}n\le\frac{2}{5}\ \cdot
	\]
	Furthermore, for sufficiently large n, $f(\seq{s})\ge 0.045 n$
	for all but an exponentially small fraction of sequences $\seq{s}$ of length~$n$; and $f(n)\le2n/5+15$. 
\end{restatable}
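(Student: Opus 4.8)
The plan is to handle the four assertions in turn: existence of the limit from an approximate subadditivity estimate, the upper bound on the limit from the explicit inequality $f(n)\le 2n/5+15$, and the lower bound from a counting argument that simultaneously yields the ``most sequences'' statement. First I would establish that $f$ is subadditive up to a constant. Given $\seq s$ and $\seq t$ of lengths $n$ and $m$, I deduplicate each part of the concatenation $\seq s\seq t$ to its own root independently---which is legal since the two blocks never interact---using $f(\seq s)+f(\seq t)$ operations and leaving a string of length at most $6$, which collapses to $\root(\seq s\seq t)$ in a bounded number of further deduplications. Taking $\seq s\seq t$ extremal of length $n+m$ gives $f(n+m)\le f(n)+f(m)+c$ for an absolute constant $c$, so $f(n)+c$ is subadditive and Fekete's lemma shows $\lim_{n\to\infty}f(n)/n$ exists and equals $\inf_n(f(n)+c)/n$. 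The two numerical bounds on the limit then follow from the corresponding finite-$n$ statements: $f(n)\le 2n/5+15$ gives $\lim f(n)/n\le 2/5$, and $f(\seq s)\ge 0.045n$ for even a single $\seq s$ gives $\lim f(n)/n\ge 0.045$.

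For the upper bound I would give an explicit construction of $\seq s$ from $\root(\seq s)$. Writing $\seq s$ as $k$ runs of lengths $\ell_1,\dots,\ell_k$, the alternating skeleton (the right number of runs, each of length one) is built in $O(\log k)$ duplications by repeated doubling, and a single run is grown from length one to length $\ell_i$ in $\lceil\log_2\ell_i\rceil$ duplications. Treating runs independently already proves $f(n)=O(n)$, but the constant it gives is too large---the profile of all runs equal to $3$ costs about $2n/3$---and the point is that such repetitive profiles are in fact cheap if one duplicates whole blocks spanning several runs at once. The construction therefore combines within-run doubling with block duplications that exploit repeated run patterns, and the constant $2/5$ (amortized, two operations per five symbols) comes out of balancing the worst case between the non-repetitive regime, where per-run doubling is used, and the repetitive regime, where block duplication is used; the additive $15$ absorbs boundary and small-length corrections. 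The work here is the constant optimization and the case bookkeeping, which is finite but tedious.

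The lower bound is the heart of the matter. I would bound the number $N(n,t)$ of length-$n$ sequences with $f(\seq s)\le t$ and compare it to $2^n$. The naive estimate ``each duplication is a (position, length) pair, hence at most $n^2$ choices per step'' only gives $N(n,t)\le 6\,n^{2t}$, which forces $t\lesssim n/\log n$ and cannot reach a linear bound; the $\log n$ per operation must be eliminated. The key step is to encode an entire derivation using only $O(1)$ bits per operation by reading the operations off the final string rather than off the intermediate strings---for instance, recording a size-$t$ subset of the $O(n)$ squares (or boundaries) of $\seq s$ together with a bounded amount of auxiliary information per chosen operation---so that $N(n,t)\le\binom{O(n)}{t}2^{O(t)}$. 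Setting $t=0.045n$ makes the exponent $H(0.045)+O(0.045)-1$ negative, where $0.045$ is calibrated precisely so that the binary-entropy term plus the per-operation overhead stays below $1$; then $N(n,0.045n)$ is an exponentially small fraction of $2^n$. This proves $f(\seq s)\ge 0.045n$ for all but an exponentially small fraction of the $\seq s$ of length $n$, and in particular $\lim f(n)/n\ge 0.045$.

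I expect the main obstacle to be exactly this $O(1)$-bit-per-operation encoding in the lower bound: one must argue that the positions and lengths of the duplications, which a priori cost $\Theta(\log n)$ bits each, are recoverable from a constant amount of information anchored to features of the final sequence, since otherwise the counting degrades to the useless $n/\log n$ regime. Everything else---the subadditivity estimate, the upper-bound construction, and the final entropy calculation that fixes the constant---is comparatively routine once this encoding is in place.
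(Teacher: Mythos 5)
Your outline matches the paper's architecture (subadditivity plus Fekete for the limit, a counting/encoding argument for the lower bound, a separate argument for $2/5$), and the Fekete part is correct as written. But both of the substantive components have genuine gaps. For the lower bound, you correctly diagnose that everything hinges on beating the $\Theta(\log n)$-bits-per-operation cost, and then you leave exactly that step unproved: ``recording a size-$t$ subset of the squares (or boundaries) of $\seq{s}$'' does not work as stated, because the deduplication steps act on \emph{intermediate} sequences, and a repeat in an intermediate sequence need not correspond to any bounded-description feature of $\seq{s}$ or of its root. The paper closes this hole with a reordering device you would need to reinvent: call a deduplication process, with steps $(i_t,h_t)$, \emph{normal} if $i_{t+1}<i_t$ implies $i_{t+1}+2h_{t+1}\ge i_t$; an exchange argument (pick, among all processes of the given length with the given endpoints, one minimizing the number of inverted position pairs, and swap any violating adjacent pair, which is possible because a violation means the two steps act on disjoint stretches) shows every sequence admits a normal process of minimum length. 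Normality is what makes the positions compressible: one records $i_1$, the set of indices where $i_t-i_{t+1}>0$, and the increments themselves, using that the positive increments sum to at most $2\sum_t h_t<2n$ and the negative ones telescope, so the whole process costs a product of finitely many factors of the form $\binom{2n+f}{f}$, $\binom{n+f}{f}$, $2^f$. That yields a description length of the form $n\phi(f/n)$ with $\phi(c)\to0$ as $c\to0$, which is the shape your bound $\binom{O(n)}{t}2^{O(t)}$ asserts; without the normalization (or some equivalent), your count is an unproven assertion and the argument collapses back to the $n/\log n$ regime.

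The upper bound has a different kind of gap: the constant $2/5$ is not attainable by the hand construction you sketch. Per-run doubling gives only $f(n)\le\frac{2}{3}n+o(n)$ (worst case: all runs of length $3$), and even invoking the known fact that every binary sequence of length at least $19$ contains a repeat of length at least $2$ improves this only to $f(n)\le\frac{1}{2}n+o(n)$; the paper states both of these as the best bounds it can get without computation. The $2/5$ comes from a computer search: defining $f(a,b)$ as the number of steps needed to reduce any sequence of length $a$ to length at most $b$, one has $f(n)\le\frac{f(a,b)}{a-b}\,n+\max_{i<a}f(i)$, and the computation $f(32,12)=8$ together with $\max_{i<32}f(i)=15$ gives $f(n)\le\frac{8}{20}n+15=\frac{2}{5}n+15$. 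So your ``finite but tedious bookkeeping'' is hiding the actual content of this half of the theorem: the balancing you describe has no demonstrated reason to bottom out at $2/5$ rather than at $1/2$ or $2/3$, and to certify $2/5$ you need either the finite computation or a genuinely new argument.
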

Although the linear lower bound on the duplication distance to the root holds for almost all sequences, finding a specific family of sequences that satisfy it appears to be difficult. The next lemma and its corollary give the best known construction for a family with large distance to the root, namely, this family achieves distance $\Omega(n/\log n)$.

\begin{lem}
\label{lem:SeqDepBound}Consider a sequence $\seq{s}$ and a positive integer $k\ge4$, and let $K(\seq{s})$ denote the number of distinct $k$-mers (sequences of length $k$) occurring in $\seq s$. We have
\[
f(\seq{s})\ge\frac{K(\seq{s})}{k-1}\ \cdot
\]
\end{lem}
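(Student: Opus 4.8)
The plan is to track the number of distinct $k$-mers along an optimal \emph{duplication} sequence that builds $\seq s$ from its root, rather than along a deduplication sequence. Writing $\seq\sigma=\root(\seq s)$, there is a chain
\[
\seq\sigma=\seq s_0\dupsto\seq s_1\dupsto\cdots\dupsto\seq s_m=\seq s
\]
with $m=f(\seq s)$, where for any sequence $\seq t$ we let $K(\seq t)$ count the distinct $k$-mers occurring in $\seq t$. I would reduce the lemma to two facts: first, that $K(\seq\sigma)=0$; and second, that a single tandem duplication increases the number of distinct $k$-mers by at most $k-1$. Granting these, telescoping along the chain gives $K(\seq s)=K(\seq s_m)\le K(\seq s_0)+m(k-1)=f(\seq s)(k-1)$, which rearranges to the claimed bound.

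The first fact is immediate: every root lies in $\{0,1,01,10,010,101\}$ and so has length at most $3$, and since $k\ge4$ such a sequence has no substring of length $k$, whence $K(\seq\sigma)=0$. This is precisely where the hypothesis $k\ge4$ enters.

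The crux is the per-step bound, which I expect to be the main obstacle; the remainder is bookkeeping. Consider one duplication $\seq x=\seq a\seq b\seq c\dupsto\seq y=\seq a\seq b\seq b\seq c$ with $|\seq b|=h$. I would split $\seq y$ into its prefix $\seq a\seq b$ (positions up to and including the first copy of $\seq b$) and its complementary suffix $\seq b\seq c$ (the second copy of $\seq b$ together with $\seq c$). Crucially, these coincide with the prefix $\seq a\seq b$ and the suffix $\seq b\seq c$ of $\seq x$, so any length-$k$ window of $\seq y$ lying entirely within the prefix, or entirely within the suffix, is a genuine substring of $\seq x$ and hence contributes no new $k$-mer. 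The only windows that can produce a $k$-mer absent from $\seq x$ are those straddling the boundary between the two copies of $\seq b$. If $|\seq a|=p$, this boundary sits between positions $p+h$ and $p+h+1$, and a window $[\,i,i+k-1\,]$ straddles it exactly when $p+h-k+2\le i\le p+h$, giving at most $k-1$ starting positions and therefore at most $k-1$ new distinct $k$-mers.

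Assembling the pieces, $K(\seq s_{j+1})\le K(\seq s_j)+(k-1)$ for each $j$, so that $K(\seq s)\le(k-1)f(\seq s)$ and hence $f(\seq s)\ge K(\seq s)/(k-1)$. The only point that needs care is the counting of boundary-straddling windows: I should verify that there are exactly $k-1$ admissible starting positions and that the prefix/suffix decomposition of $\seq y$ really is literal—i.e. that every window confined to one side reproduces a substring of $\seq x$, which holds because the prefix $\seq a\seq b$ and the suffix $\seq b\seq c$ of $\seq y$ are themselves substrings of $\seq x$.
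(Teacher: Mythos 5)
Your proof is correct and is essentially the paper's own argument run in the opposite direction: the paper bounds the number of distinct $k$-mers \emph{lost} per deduplication step by $k-1$ (via the same counting of the at most $k-1$ windows that straddle the boundary between the two copies), while you bound the number \emph{gained} per duplication step, which is the identical inequality. The decomposition into prefix $\seq a\seq b$, suffix $\seq b\seq c$, and boundary-straddling windows, together with the observation that roots have no $k$-mers for $k\ge 4$, matches the paper's proof exactly.
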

\begin{proof} For two sequences $\seq x=\seq t\seq u\seq u\seq v$ and $\seq y=\seq t\seq u\seq v$, we have $K(\seq y)\ge K(\seq x)-(k-1)$, since the only case in which a $k$-mer occurs in $\seq x$ but not in $\seq y$ is when the only instance of that $k$-mer intersects both copies of $\seq u$ in $\seq x$. There are at most $k-1$ $k$-substrings of $\seq x$ that intersect both copies of $\seq u$. Finally, no root contains a $k$-mer for $k\ge4$.
\end{proof}

A \emph{binary De Bruijn sequence}~\cite{debruijn1946combinatorial} of order $k$ is a binary sequence of length $n=2^k$ that when viewed cyclically contains every possible binary sequence of length $k$ as a substring exactly once. For example, $0011$ and $00010111$ are De Bruijn sequences of order 2 and order 3, respectively. A binary De Bruijn sequence of order $k$ and length $n=2^k$ has precisely $n-k+1$ distinct $k$-mers. Hence, we have the following corollary.
\begin{cor}
	For any binary De Bruijn sequence $\seq{s}$ of order $k$ (which has length
	$n=2^{k}$), we have 
	\[
	f(\seq{s})\ge\frac{n-\log_2 n}{\log_2 n}\ \cdot
	\]
\end{cor}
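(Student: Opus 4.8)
The plan is to invoke Lemma~\ref{lem:SeqDepBound} with the precise $k$-mer count of a De Bruijn sequence and then massage the resulting expression into the stated form by elementary algebra.

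First I would justify that $K(\seq{s}) = n - k + 1$ for a De Bruijn sequence $\seq{s}$ of order $k$ and length $n = 2^k$. Read cyclically, such a sequence contains each of the $2^k$ length-$k$ binary strings exactly once, so its $k$-mers are pairwise distinct; reading $\seq{s}$ as an ordinary (non-cyclic) string of length $n$ simply discards the $k-1$ windows that wrap past position $n$, leaving exactly $n - k + 1$ distinct $k$-mers. Since the corollary concerns orders $k \geq 4$ (where $k = \log_2 n$), the hypothesis of Lemma~\ref{lem:SeqDepBound} is satisfied, and it gives
\[
f(\seq{s}) \geq \frac{K(\seq{s})}{k-1} = \frac{n - k + 1}{k-1}.
\]

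It then remains to compare $\frac{n-k+1}{k-1}$ with the target $\frac{n-\log_2 n}{\log_2 n} = \frac{n-k}{k}$. I would clear denominators---both positive once $k \geq 2$---reducing the desired inequality $\frac{n-k+1}{k-1} \geq \frac{n-k}{k}$ to $k(n-k+1) \geq (k-1)(n-k)$; expanding, the difference of the two sides is exactly $n > 0$, so the inequality holds. Chaining the two bounds yields $f(\seq{s}) \geq \frac{n-\log_2 n}{\log_2 n}$.

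There is no genuine obstacle here: the substance is entirely contained in Lemma~\ref{lem:SeqDepBound}, and the only care needed is the cyclic-versus-linear bookkeeping for the $k$-mer count and the check that $k \geq 4$ so that the lemma applies. In fact the honest bound $\frac{n-k+1}{k-1}$ is slightly stronger than what is claimed; the weaker but cleaner form $\frac{n-\log_2 n}{\log_2 n}$ is presumably preferred for readability.
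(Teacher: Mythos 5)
Your proposal is correct and follows exactly the paper's route: the paper also observes that a De Bruijn sequence of order $k$ has $n-k+1$ distinct $k$-mers and immediately invokes Lemma~\ref{lem:SeqDepBound}, leaving the comparison of $\frac{n-k+1}{k-1}$ with $\frac{n-\log_2 n}{\log_2 n}$ implicit. Your explicit cross-multiplication check (the difference of the two sides being exactly $n$) and the cyclic-versus-linear counting remark simply fill in details the paper omits.
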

It is worth noting that using the same technique as the proof of $f(n)=\Omega(n)$ in Theorem~\ref{thm:bounds}, and the fact that there are at least $\frac{2^{n/2}}{n}$ De Bruijn sequences of length $n$ when $n$ is a power of two,\footnote{If De Bruijn seqences are defined cyclically as opposed to linearly, there are exactly $\frac{2^{n/2}}{n}$ De Bruijn sequences of length $n$} one can show that the largest duplication distance for De Bruijn sequences grows linearly in their length.

A question arising from observing that $f(n)=\Theta(n)$ is that how does allowing mismatches in the duplication process affect the  distance to the root. In particular, for what values of $\beta$, is $f_\beta(n)$ linear in $n$ and for what values is it logarithmic? The next theorem establishes that there is a sharp transition at $\beta=1/2$.
\begin{restatable}{thm}{thmbetalessthanhalf}\label{thm:betalessthanhalf}
If $\beta<1/2$, then there exists a constant $c=c(\beta)>0$
such that 
\[
f_{\beta}(n)\ge cn.
\]
Furthermore, if $\beta>1/2$, for any constant $C>\left\lceil \frac{2\beta+1}{2\beta-1}\right\rceil ^{2}$ and sufficiently large $n$,
\[
f_{\beta}(n)\le C\ln n.
\]
\end{restatable}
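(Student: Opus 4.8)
The plan is to treat the two regimes separately: a counting (entropy) argument for the lower bound when $\beta<1/2$, reusing the machinery behind Theorem~\ref{thm:bounds}, and an explicit geometric-growth construction for the upper bound when $\beta>1/2$. The threshold $\beta=1/2$ is exactly the point at which the binary entropy $H(\beta)$ crosses $1$, and also the point at which two blocks of equal length are ``typically'' within relative Hamming distance $\beta$; both facts drive the transition.

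For the lower bound ($\beta<1/2$) I would rerun the counting argument from the proof that $f(n)=\Omega(n)$ in Theorem~\ref{thm:bounds}, accounting for the extra freedom of approximate duplication. In the exact case a derivation from a root is described by a sequence of duplications, each specified by a position and a length, and Theorem~\ref{thm:bounds} bounds the number of length-$n$ sequences reachable with at most $t$ duplications by $2^{n\,\phi(t/n)}$ for a function $\phi$ with $\phi(0)=0$ that stays below $1$ for small argument. In the $\beta$-approximate model each duplication of a block of length $h$ may additionally insert one of at most $\sum_{j\le\lfloor\beta h\rfloor}\binom{h}{j}\le 2^{H(\beta)h}$ distinct blocks. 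Since the duplicated lengths in any derivation producing a length-$n$ sequence sum to at most $n$, these factors multiply to at most $2^{H(\beta)n}$, so the number of sequences reachable with at most $t$ approximate duplications is at most $2^{n(\phi(t/n)+H(\beta))}$. Because $\beta<1/2$ gives $H(\beta)<1$, setting $\delta=1-H(\beta)>0$ and choosing $c=c(\beta)>0$ small enough that $\phi(c)<\delta$ makes this count at most $2^{n(1-\delta')}$ with $\delta'>0$ for all $t\le cn$; hence all but an exponentially small fraction of sequences require more than $cn$ approximate duplications, giving $f_\beta(n)\ge cn$. The only point needing care is that the entropy factor decouples cleanly from the position/length count, since it depends only on $\sum_i h_i\le n$ rather than on the individual operations.

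For the upper bound ($\beta>1/2$) I would give a constructive scheme that builds an arbitrary target $\seq s$ from its root using only duplications, so that every inserted block is a genuine substring of $\seq s$ placed in its final location (no deletions are available to clean up afterwards). Maintaining the current string as a prefix of $\seq s$, the scheme grows the prefix geometrically: from a prefix of length $\ell$ it appends a block of length $m$ proportional to $\ell$ using $O(1)$ approximate duplications, where the source block is already present and lies within relative Hamming distance $\beta$ of the target block. With the growth factor bounded away from $1$, length $n$ is reached in $O(\ln n)$ duplications; tracking constants, the per-step block length and the number of steps are both governed by $q=\lceil\frac{2\beta+1}{2\beta-1}\rceil$, which yields the bound $C\ln n$ for any $C>q^2$.

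The main obstacle is the existence step in the construction: for a worst-case $\seq s$ one must guarantee that at each stage there is an already-built block of the chosen length within relative Hamming distance $\beta$ of the block to be appended. This is where $\beta>1/2$ is essential -- an averaging argument over the candidate source blocks and their alignments shows the average relative distance to the target is about $1/2<\beta$, so a valid source must exist -- and it is also where the precise constant emerges, from balancing how large a block may be appended in a single step against how many candidate sources remain available. Establishing this averaging bound with the exact dependence $\left\lceil\frac{2\beta+1}{2\beta-1}\right\rceil^{2}$, and confirming that the appended blocks can always be realized in their final positions without creating spurious substrings, is the technical heart of the argument; the geometric summation and the $\beta<1/2$ counting are then routine by comparison.
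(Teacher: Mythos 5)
Your lower-bound argument for $\beta<1/2$ is essentially the paper's: the paper reruns the normal-deduplication counting behind Theorem~\ref{thm:bounds}, paying an extra factor of at most $2^{h_{t}H(\beta)}$ per step to encode the inexact copy (plus a factor $2^{f}$ to record which of the two blocks was removed), and uses $\sum_{t}h_{t}\le n$ to decouple the total entropy cost as $2^{nH(\beta)}$; since $H(\beta)<1$, linearity survives. That half is fine.

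The upper bound for $\beta>1/2$ has a genuine gap. Your scheme maintains the current string as a prefix $s_{1}\dotsm s_{\ell}$ of the target $\seq s$ and extends it by an approximate tandem duplication; but a tandem duplication inserts the copy \emph{adjacent to its source}, so to keep the prefix invariant the source must be the suffix $s_{\ell-m+1}\dotsm s_{\ell}$ and you need
\[
d_{H}\bigl(s_{\ell-m+1}\dotsm s_{\ell},\ s_{\ell+1}\dotsm s_{\ell+m}\bigr)\le\beta m
\]
for some usable $m$. This existence claim, which you support by asserting that the average relative distance over candidate sources is about $1/2$, is false at worst-case boundaries: take $\seq s=0^{\ell}1^{n-\ell}$ and the boundary at position $\ell$; for every $m$ the two straddling blocks are $0^{m}$ and $1^{m}$, at relative distance $1>\beta$, so no approximate duplication can extend the prefix at all. (That particular sequence is easy to build by other means, but it kills the averaging step your construction rests on.) If instead your averaging is meant to range over all positions and alignments, then the repeat you find need not straddle the boundary, and using it destroys the prefix invariant -- so either reading leaves the construction incomplete. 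The paper avoids exactly this trap by working in the deduplication direction and proving a \emph{location-free} existence result, Theorem~\ref{thm:rep_beta}: partition $\seq s$ into $K=k^{2}$ blocks of length $B=\lfloor n/K\rfloor$, view the $k+1$ shifted concatenations ${\seq t}_{i}={\seq s}_{i}\dotsm{\seq s}_{i+K-k-1}$, $1\le i\le k+1$, as a code, invoke Plotkin's bound to get two of them at relative distance at most $\frac{1}{2}+\frac{1}{2k}$, and then -- because those two strings are shifts of one another by $h=(j-i)B$ -- average over their aligned length-$h$ sub-blocks to produce two \emph{adjacent} blocks at relative distance at most $\frac{1}{2}+\frac{1}{k-1}\le\beta$. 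The resulting $\beta$-repeat sits at an uncontrolled position, which is harmless for deduplication: delete one block, recurse, and since the repeat has length at least $\lfloor n/K\rfloor$ the length shrinks geometrically, giving $C\ln n$ steps for any $C>K=\left\lceil\frac{2\beta+1}{2\beta-1}\right\rceil^{2}$. Without a lemma of this kind, and with your prefix invariant forbidding the use of repeats found elsewhere, your construction cannot be completed as described.
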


Finally, we establish that the limit of $\frac{f(n)}{n}$ is the same if we consider only sequences with root $10$ or only sequences with root $101$.
\begin{restatable}{thm}{thmdiffrentroots}
The limits $\lim_{n}\frac{f_{10}(n)}{n}$ and $\lim_{n}\frac{f_{101}(n)}{n}$ exist and are equal to $\lim_n\frac{f(n)}{n}$.\end{restatable}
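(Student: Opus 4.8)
The plan is to reduce the statement to the already-established existence of $\gamma:=\lim_n f(n)/n$ from Theorem~\ref{thm:bounds}, by showing that, up to a lower-order additive term, $f_{10}(n)$, $f_{101}(n)$ and $f(n)$ all coincide.

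First I would dispose of the four irrelevant roots. Both the bit-complementation map ($0\leftrightarrow 1$) and the reversal map preserve repeats, commute with (de)duplication, and send roots to roots (e.g. $\overline{01}=10$, $\overline{010}=101$, $(01)^R=10$); hence $f$ is invariant under them, giving $f_{01}(n)=f_{10}(n)$ and $f_{010}(n)=f_{101}(n)$. The only sequences with root $0$ (resp.\ $1$) are $0^n$ (resp.\ $1^n$), and since a single duplication at most doubles a run, $f_0(n)=f_1(n)=\lceil\log_2 n\rceil=o(n)$. Consequently, for all large $n$ we have $f(n)=\max\!\left(f_{10}(n),f_{101}(n)\right)$, so it remains to show that each of $f_{10}(n)/n$ and $f_{101}(n)/n$ converges to $\gamma$; equivalently, that neither root class falls behind the maximum by more than $o(n)$.

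The engine of the argument is a stability lemma: there is a constant $c_0$ (in fact even an $o(|\seq s|)$ bound would suffice below) such that for every $\seq s$ and every symbol $a$, $|f(\seq s a)-f(\seq s)|\le c_0$, and likewise for prepending. The upper bound is easy: deduplicate $\seq s$ to $\root(\seq s)$ — these operations lie strictly inside $\seq s$ and stay valid after appending $a$ — reaching $\root(\seq s)\,a$, a word of length at most $4$, which reduces to $\root(\seq s a)$ in $O(1)$ further steps. The lower bound $f(\seq s)\le f(\seq s a)+c_0$ is the technical heart: starting from an optimal deduplication of $\seq s a$, I would track the lineage of the appended symbol and delete it, verifying that each deduplication either descends unchanged to a deduplication of $\seq s$ or needs only a bounded local repair near the tracked position; the prepend case then follows by reversal symmetry. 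I expect this tracking argument — controlling how the deleted symbol interacts with the blocks removed by the optimal process — to be the main obstacle, although the slack noted above (any $o(n)$ bound works) makes it considerably more forgiving.

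Granting the lemma, the conclusion is a short squeeze carried out on maximizers. Prepending a $1$ to a root-$10$ word keeps it root-$10$, while appending a $0$ to a root-$101$ word makes it root-$10$; applying the lower bound of the lemma to the respective maximizers yields $f_{10}(n+1)\ge f_{10}(n)-c_0$ and $f_{10}(n+1)\ge f_{101}(n)-c_0$, hence $f_{10}(n+1)\ge f(n)-c_0$. Symmetrically, prepending $1$ to a root-$101$ word and appending $1$ to a root-$10$ word give $f_{101}(n+1)\ge f(n)-c_0$. Applying the lemma once more to an $(n+1)$-maximizer of $f$ gives $f(n)\ge f(n+1)-c_0$. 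Combining these, $f(n+1)-2c_0\le f_{10}(n+1)\le f(n+1)$ and the same for $f_{101}$, so $f_{10}(n)=f(n)-O(1)=f_{101}(n)$. Dividing by $n$ and invoking $\lim_n f(n)/n=\gamma$ shows that $\lim_n f_{10}(n)/n$ and $\lim_n f_{101}(n)/n$ both exist and equal $\gamma$.
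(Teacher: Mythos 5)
The overall architecture of your argument (symmetry for the roots $01$ and $010$, the formula $f_0(n)=f_1(n)=\lceil\log_2 n\rceil$, and the squeeze on maximizers) is sound, and so is the easy half of your stability lemma: $f(\seq{s}a)\le f(\seq{s})+O(1)$ holds because every deduplication performed inside $\seq{s}$ remains valid after $a$ is appended. But the entire proof rests on the other half, $f(\seq{s})\le f(\seq{s}a)+c_0$ (and its prepend analogue): each of the inequalities in your squeeze --- $f_{10}(n+1)\ge f_{10}(n)-c_0$, $f_{10}(n+1)\ge f_{101}(n)-c_0$, and $f(n)\ge f(n+1)-c_0$ --- is an instance of it. This half you only sketch, and the sketched tracking argument breaks down. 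Keep the natural invariant that the simulated sequence equals the current sequence of the optimal process for $\seq{s}a$ with the tracked symbol deleted. The moment that process deduplicates a repeat $\seq{v}\seq{v}$ containing the tracked symbol, say at position $j$ of one copy, the simulation holds $\seq{t}\seq{v}\seq[-j]{v}\seq{w}$ (where $\seq[-j]{v}$ denotes $\seq{v}$ with its $j$th symbol deleted) and must reach $\seq{t}\seq[-j]{v}\seq{w}$; the block it needs to remove is not one copy of a square, since $\seq{v}\seq[-j]{v}$ is not a repeat, so this is not a legal deduplication, and no $O(1)$-step repair is apparent. Worse, such collisions can occur at a constant fraction of the steps (a process that repeatedly deduplicates suffix repeats, as for $0^{2^k}$, does exactly this), so even a hypothetical $O(1)$ repair per collision would only yield a multiplicative bound $f(\seq{s})\le C\,f(\seq{s}a)$, not the additive $O(1)$ --- nor even the $o(n)$ total slack --- that your squeeze requires, since the maximizers to which you apply the lemma have $f=\Theta(n)$ and hence potentially $\Theta(n)$ collisions.

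This is not a repairable local slip: deletion-stability of the duplication distance is precisely the kind of lower-bound transfer for which no tools are available, and it may simply be false. For instance, if $\seq{u}=\seq{w}a$, then $\seq{x}=\seq{u}\seq{u}$ satisfies $f(\seq{x})\le f(\seq{u})+1$, while deleting the last symbol leaves $\seq{w}a\seq{w}$, in which the two copies of $\seq{w}$ are no longer adjacent; tandem deduplication cannot exploit them directly, and nothing rules out $f(\seq{w}a\seq{w})$ being close to $2f(\seq{w})$ rather than $f(\seq{u})+O(1)$. Establishing your lemma would demand exactly the structure-dependent lower bounds that the paper lists among its open problems (it cannot even exhibit one explicit family of sequences with linear distance). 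The paper's own proof avoids all of this with a Fekete-type argument that needs only upper bounds: for fixed $k$, an arbitrary length-$n$ sequence is partitioned into $m\le n/k$ blocks, each of length at least $k$ and beginning and ending with distinct symbols (hence of root $01$ or $10$); each block is deduplicated to its root in roughly $f_{10}(k)+2\log_2 k+5$ steps, and the resulting word of length at most $2m+k$ is finished in as many steps, giving $f(n)/n\le f_{10}(k)/k+O\bigl((\log_2 k)/k\bigr)+k/n$. Combined with the trivial $f_{10}(n)\le f(n)$, this squeezes $f_{10}(n)/n$ to $\lim_n f(n)/n$. The crucial difference is that the paper only ever constructs deduplication processes; it never has to assert that a one-symbol edit cannot substantially decrease the distance, which is the step your proposal cannot supply.
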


\subsection{Related Work}\label{sub:relwork}
Tandem duplications and repeats in sequences have been studied from a variety of points of view. One of the most relevant to this work is the study of estimating the tandem duplication history of a given sequence, i.e., a sequence of duplication events that may have generated the sequence, see e.g., \cite{benson1999reconstructing,tang2002zinc,gascuel2005reconstructing}. While the afforementioned works study the problem from an algorithmic point of view, in this paper, we are focused on extremal distance values for binary sequences. Furthermore, \cite{tang2002zinc,gascuel2005reconstructing} have a more restrictive duplication model than that of the present paper.

Another aspect, the study of the ability of duplication mutations to generate diversity, has been recently investigated from an information-theoretic point of view~\cite{farnoud2016capacity,jain2015capacity}
. In particular, \cite{farnoud2016capacity} models sequences generated from a starting ``seed'' through different types of duplications as sequence systems and studies their \emph{capacity} and \emph{expressiveness}. The notion of capacity quantifies the ability of the systems to generate diverse families of sequences, and expressiveness is concerned with determining whether every sequence can be generated as a substring of another sequence, if not independently. The results in~\cite{farnoud2016capacity,jain2015capacity} include lower bounds on the capacity of tandem duplications and establishing that certain systems have nonzero capacity.  The aforementioned works focus on the possibility of generating sequences and do not consider the number of duplication steps it takes to do so for any given sequence, which is the subject of the current paper. 

Finally, we mention that the stochastic behavior of certain duplication systems has been studied in~\cite{elishco2016capacity,farnoud2015stochastic}, and error-correcting codes for combating duplication errors have been introduced in~\cite{jain2016duplication}.

\section{Bounds on $f(n)$\label{sec:genBounds}}
\thmbounds*
The lower bound of Theorem~\ref{thm:bounds} is proved with the help of Theorem~\ref{t21}, and its upper bound uses Theorem~\ref{thm:ub}. These theorems are stated next.
\begin{thm}
\label{t21}For $0<\alpha<1$, consider the set of the $\left\lfloor 2^{n\alpha}\right\rfloor $ sequences of length $n$ with the smallest duplication distance to the root and let $F_{\alpha}$ be the maximum duplication distance to the root for a sequence in this set. Then
\begin{equation}
6n\sum_{f=1}^{F_{\alpha}}\binom{n+f}{f}\binom{2n+f}{f}\binom{2n+f+2}{f}2^{f}\ge2^{n\alpha}-1.\label{eq:lb2}
\end{equation}
\end{thm}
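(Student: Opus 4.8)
The plan is to reduce the statement to a single counting inequality and then read off the result from the definition of $F_\alpha$. Let $N(n,F)$ denote the number of binary sequences of length $n$ whose duplication distance to the root is at most $F$. By the very definition of $F_\alpha$, the $\lfloor 2^{n\alpha}\rfloor$ sequences of length $n$ with the smallest distance to the root all have distance at most $F_\alpha$, so
\[
N(n,F_\alpha)\ge\lfloor 2^{n\alpha}\rfloor\ge 2^{n\alpha}-1.
\]
Hence it suffices to prove the upper bound
\[
N(n,F)\le 6n\sum_{f=1}^{F}\binom{n+f}{f}\binom{2n+f}{f}\binom{2n+f+2}{f}2^{f}
\]
for every $F$, and then substitute $F=F_\alpha$. (For $n\ge 4$ there are no length-$n$ square-free sequences, so no length-$n$ sequence has distance $0$; this is why the sum may start at $f=1$.)

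To bound $N(n,F)$ I would count \emph{duplication histories}. Every sequence $\seq s$ with $f(\seq s)=f$ can be produced from its root by $f$ duplications $\seq\sigma\dupsto\cdots\dupsto\seq s$, and since each sequence has a unique root drawn from a set of size six, $N(n,F)\le 6\sum_{f=1}^{F}H(n,f)$, where $H(n,f)$ is the number of length-$f$ duplication histories that start from a fixed root $\seq\sigma$ and end in a length-$n$ string. A history is determined by recording, at each step, the position and the length of the duplicated block. The catch is that recording the position and length independently at each step, each ranging up to $n$, overcounts by roughly a factor $(f!)^3$ relative to the target bound; so I would first impose a \emph{canonical} order on the $f$ operations — for instance reading the created tandem repeats from left to right in $\seq s$ — so that the recorded coordinates become weakly monotone across the $f$ steps.

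The combinatorial engine is the elementary inequality that, for nonnegative integers $c_1,\dots,c_f$ with $\sum_i c_i\le M$, one has $\prod_{i=1}^{f}(c_i+1)\le\binom{M+f}{f}$: the left-hand side counts integer tuples $(d_1,\dots,d_f)$ with $0\le d_i\le c_i$, every such tuple satisfies $\sum_i d_i\le M$, and the number of nonnegative tuples with $\sum_i d_i\le M$ is exactly $\binom{M+f}{f}$. I would apply this three times. The duplication lengths $h_1,\dots,h_f$ are positive and satisfy $\sum_i h_i=n-|\seq\sigma|\le n$, so taking $c_i=h_i-1$ turns their contribution into the factor $\binom{n+f}{f}$. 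The two positional coordinates that locate each duplicated block (its endpoints in the running string, whose length never exceeds $n$) are recorded as \emph{gaps} in the canonical order; these gaps are nonnegative with totals bounded by the ambient length, which over the whole process can be kept below $2n$ and $2n+2$, producing $\binom{2n+f}{f}$ and $\binom{2n+f+2}{f}$. The factor $2^{f}$ absorbs one binary choice per step (disambiguating, for each operation, which of two admissible reconstructions to perform), and the leading $6n$ accounts for the root and a single non-monotone boundary coordinate.

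The hard part is not the arithmetic but the bookkeeping that makes the encoding \emph{injective}: I must choose the canonical order and the coordinate frame so that the recorded triple of monotone sequences, together with the $f$ bits, reconstructs the entire history — and hence $\seq s$ — uniquely, while simultaneously guaranteeing that the three positional/length budgets really are $n$, $2n$, and $2n+2$. Verifying that processing the operations in the canonical spatial order keeps the positional budgets within $2n+O(1)$ despite the repeated insertions, and that distinct sequences never collide under the encoding, is where the genuine work lies. Once the injection is established, the product-to-binomial inequality applied in each of the three coordinate groups yields the displayed upper bound on $N(n,F)$, and the theorem follows from the reduction in the first paragraph.
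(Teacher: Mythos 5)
Your reduction to bounding $N(n,F)$ and your counting engine (the number of nonnegative integer tuples with sum at most $M$ is $\binom{M+f}{f}$) are exactly right, and they match the paper's strategy of injectively encoding a (de)duplication process by a root, a tuple of lengths, and positional data. But the step you defer --- finding a canonical form of the process under which the positional budgets are $2n$ and $2n+2$ --- is the entire content of the theorem, and the specific plan you sketch for it would fail. You propose to reorder the $f$ operations (``reading the created tandem repeats from left to right in $\seq s$'') so that the recorded coordinates become \emph{weakly monotone}. This is impossible in general: two steps can be interchanged only when the repeat affected by the later step lies entirely to one side of the earlier step's position; overlapping steps do not commute, so no reordering makes the position sequence monotone. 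The paper's proof lives exactly in this obstruction: it defines a \emph{normal} deduplication process, in which backward jumps are allowed but must be small --- if $i_{t+1}<i_t$ then $i_{t+1}+2h_{t+1}\ge i_t$ --- and proves by a swap/minimality argument (Claim~\ref{clm:normal}) that every process can be replaced by a normal one of the same length with the same endpoints. Normality is what bounds the total backward motion: each backward jump is at most $2h_{t+1}$, so the positive differences $i_t-i_{t+1}$ sum to less than $2\sum_t h_t<2n$ (giving $\binom{2n+f}{f}$), and then $i_1-i_f\ge-2$ bounds the magnitude of the sum of the non-positive differences by $2n+2$ (giving $\binom{2n+f+2}{f}$). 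Without normality or some substitute for it, no budget of order $n$ on the positional data can be justified.

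Your own accounting also betrays the gap: if the coordinates really were weakly monotone, you would need no sign information at all, yet the target bound carries a factor $2^f$. In the paper that factor records precisely which steps move backward versus forward, i.e., the set $\{t: i_t-i_{t+1}>0\}$ --- unavoidable once one accepts that only near-monotonicity is achievable. You instead assign $2^f$ to ``disambiguating which of two admissible reconstructions to perform,'' but for \emph{exact} duplications there is no such ambiguity: removing either block of a repeat yields the same sequence, and the shorter sequence together with $(i_t,h_t)$ determines the longer one uniquely (this binary choice is only needed in the approximate setting of Theorem~\ref{thm:betalessthanhalf}). So as written, the proposal relies on a monotonization that cannot be carried out and simultaneously spends the $2^f$ budget on a choice that does not exist, leaving no room for the sign pattern that the correct argument requires.
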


Before stating the proof, we present some background, definitions, and a useful claim, as well as a simpler but weaker result.

Recall that if the sequence $\seq{s}=s_{1}s_{2}\dotsm s_{m}$ contains a repeat, then omitting one of the two blocks of this repeat to obtain a new sequence is called a deduplication. We also refer to the resulting sequence $\seq{s}'$ as a deduplication of $\seq{s}$, and write $\seq s\reducesto\seq s'$. A \emph{deduplication process} for a binary sequence $\seq{s}$ is a sequence of sequences $\seq{s}=\seq[0]{s}\reducesto\seq[1]{s}\reducesto\seq[2]{s}\reducesto\dotsm\reducesto\seq[f]{s}=\root(\seq s)$, where each $\seq[i+1]{s}$ is a deduplication of $\seq[i]{s}$ and the final sequence $\seq[f]{s}$ is the (square-free) root of $\seq s$. The \emph{length} of the deduplication process above is $f$, that is, the number of deduplications in it. A deduplication of $\seq{s}$ is an $(i,h)$-\emph{step} if $i$ is the starting position of (the first block) of a repeat of length $h$ and one of the blocks of this repeat is omitted. For example, if $\seq{s}=123\underline{134}13451$, a $(4,3)$-step produces $\seq s'=12313451$. A deduplication process of length $f$ of a sequence $\seq s$  can be described by a sequence of pairs $(i_{t},h_{t})_{t=1}^{f}$, where step number $t$ is an $(i_{t},h_{t})$-step. It is not difficult to check that knowing the final sequence in the process, and knowing all the pairs $(i_{t},h_{t})$ of deduplications in the process, in order, we can reconstruct the original sequence $\seq s$. 

From the preceding discussion, each binary sequence $\seq s$ can be encoded as the pair $\left(\seq \sigma,(i_{t},h_{t})_{t=1}^{f(\seq s)}\right),$ where $\seq \sigma$ is the root of $\seq s$ and $(i_{t},h_{t})_{t=1}^{f(\seq s)}$ a deduplication process of $\seq s$. Since there are only $6$ possibilities for $\seq \sigma$, and less than $n^{2}$ possibilities for each pair $(i_{t},h_{t})$, if $F=f(n)$, then
\begin{equation}
6\sum_{f=1}^{F}\left(n^{2}\right)^{f}\geq2^{n},\label{eq:n/logn}
\end{equation}
which implies that $F=f(n)=\Omega(n/\log n)$. 

In the aforementioned encoding, several deduplication processes may map to the same sequence. We improve upon~(\ref{eq:n/logn}) by defining deduplication processes of a special form that remove some of the redundancy, and by doing so, we obtain~(\ref{eq:lb2}), which will lead to the linear lower bound of Theorem~\ref{thm:bounds}.
\begin{defn}
A deduplication process $\seq s=\seq[0]s\reducesto \seq[1]s\reducesto \seq[2]s\reducesto\dotsm\reducesto \seq[f]s=\root(\seq s)$ of a sequence $\seq s$, in which the steps are $(i_{1},h_{1}),(i_{2},h_{2}),\ldots,(i_{f},h_{f})$, is \emph{normal} if the following condition holds: For any $1\leq t<f$, if $i_{t+1}<i_{t}$ then $i_{t+1}+2h_{t+1}\ge i_{t}$. \end{defn}
The following claim shows that if we limit ourselves to normal deduplication processes, we can still encode every binary sequence with processes of the same length.
\begin{claim}
\label{clm:normal}For any deduplication process $\seq s=\seq[0]s\reducesto \seq[1]s\reducesto \seq[2]s\reducesto\dotsm\reducesto \seq[f]s=\root(\seq s)$ of length $f$ of a sequence $\seq s$, there is a normal deduplication process $\seq s=\seq[0]s\reducesto \seq[1]s'\reducesto \seq[2]s'\reducesto\dotsm\reducesto \seq[f]s'=\seq[f]s$ of the same length, with the same final sequence.
\end{claim}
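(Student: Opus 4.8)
The plan is to show that any deduplication process can be rearranged into a normal one of the same length by repeatedly swapping adjacent steps that violate the normality condition, using a bubble-sort-style argument. The key observation is that normality is a \emph{local} condition relating consecutive steps $t$ and $t+1$: it is violated precisely when $i_{t+1}<i_{t}$ yet $i_{t+1}+2h_{t+1}<i_{t}$, i.e., when step $t+1$ acts on a repeat lying strictly to the left of step $t$ with a gap between them. In that situation the two repeats are disjoint and do not overlap, so the two deduplications commute: performing them in the opposite order on $\seq[t-1]s$ yields the same sequence $\seq[t+1]s$. I would make this commutation precise by checking that when the two affected intervals are disjoint, removing one block does not shift the position of the other relative to the part of the sequence it sits in, so after relabeling the starting index of the later (now earlier) step the result is identical.

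First I would set up the rearrangement formally. Given a non-normal process, locate the \emph{last} index $t$ at which the normality condition fails, or alternatively argue by a potential function. A clean choice is to define a potential such as $\Phi=\sum_t \text{(something measuring out-of-order pairs)}$ and show each swap of a violating adjacent pair strictly decreases it while keeping the process length and final sequence fixed; since $\Phi$ is a non-negative integer, the procedure terminates in a normal process. Concretely, I would argue that if steps $(i_t,h_t)$ and $(i_{t+1},h_{t+1})$ violate normality, then swapping them to $(i_{t+1},h_{t+1})$ followed by $(i_t,h_t)$ (with indices unchanged, since disjointness means neither removal alters the other's starting position) produces an intermediate sequence $\seq[t]s'$ differing from $\seq[t]s$ but agreeing on $\seq[t-1]s$ and $\seq[t+1]s$, so the rest of the process can be left intact.

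The main obstacle will be the bookkeeping of positions: I must verify that when the later step's repeat lies entirely to the left of the earlier step's repeat (with $i_{t+1}+2h_{t+1}<i_t$ ensuring a strict gap, hence no shared or adjacent symbols), deleting the block of the left repeat does not change the absolute starting position $i_t$ of the right repeat as seen from the start of the string, and vice versa for the right deletion not affecting the left. Because the right repeat begins at position $i_t \ge i_{t+1}+2h_{t+1} > i_{t+1}+h_{t+1}$, it lies wholly after both blocks of the left repeat, so after removing $h_{t+1}$ symbols on the left its starting position decreases by exactly $h_{t+1}$ in both orderings consistently; one checks the two orders yield the identical final string, and the swapped pair can be re-indexed to satisfy (or at least not newly violate) the normality inequality. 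I would then confirm that the swap does not create a new violation further along that increases the potential, which is why a carefully chosen monotone potential (e.g., summing $i_t$ weighted to penalize inversions) is preferable to ad hoc case analysis.

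Finally, I would conclude that iterating these commutations terminates with a normal process of the same length $f$ and the same root $\seq[f]s=\root(\seq s)$, which is exactly the claim. The length is preserved at every swap since we only reorder steps and never add or delete one, and the endpoints $\seq[0]s=\seq s$ and $\seq[f]s$ are fixed throughout, so the resulting process is a valid deduplication process of $\seq s$ that is normal.
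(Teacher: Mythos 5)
Your proposal is correct, and at its core it is the same exchange argument the paper uses: the paper considers, among all length-$f$ processes with the same endpoints, one minimizing the number of inverted pairs $(t,q)$ with $t<q$ and $i_q<i_t$, and derives a contradiction by swapping a violating adjacent pair --- which is exactly your bubble-sort step phrased as a one-shot extremal argument. Where you genuinely diverge, and come out ahead, is in the details you insist on checking. First, the re-indexing: your second paragraph's parenthetical claim that the swap can be done ``with indices unchanged'' is false (deleting the left block shifts everything to its right), but your third paragraph corrects this --- after the swap the trailing step must become $(i_t-h_{t+1},\,h_t)$ --- and that corrected version is what a complete proof needs; the paper's proof silently switches the two pairs as written, glossing over this shift. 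Second, the potential: once the re-indexing is accounted for, the paper's own potential (the inversion count) need not strictly decrease under a swap --- if some earlier step $p<t$ has $i_t-h_{t+1}<i_p\le i_t$, the swap kills the inversion at $(t,t+1)$ but creates a new one at $(p,t+1)$, and with several such $p$ the count can even increase --- so minimality of the inversion count alone does not immediately yield the contradiction. Your worry that a swap might ``create a new violation \dots that increases the potential'' is therefore exactly the right worry, and your suggested fix works in its simplest form: the plain, unweighted sum $\sum_t i_t$ drops by exactly $h_{t+1}\ge 1$ at every swap (only the two swapped steps change, contributing $i_t+i_{t+1}\mapsto i_{t+1}+(i_t-h_{t+1})$), is a positive integer bounded below by $f$, and so the swapping terminates; at termination no adjacent pair violates normality, and every swap preserves the length, the initial sequence, and the final sequence. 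In short: same approach as the paper, but your version --- swap with re-indexing, terminate via the unweighted index sum --- is the airtight implementation of it, and the paper's inversion-count variant needs the same repair.
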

\begin{proof} Among all deduplication processes of length $f$ starting with $\seq{s}$ and ending with $\seq[f]{s}$, consider the one minimizing the number of pairs $(i_{t},h_{t})$, $(i_{q},h_{q})$ with $1\leq t<q\leq f$, and $i_{q}<i_{t}$. We claim that this process is normal. Indeed, otherwise there is some $t$, $1\leq t<f$ so that $i_{t+1}<i_{t}$ and $i_{t+1}+2h_{t+1}< i_{t}$. But in this case we can switch the steps $(i_{t},h_{t})$ and $(i_{t+1},h_{t+1})$, performing the step $(i_{t+1},h_{t+1})$ just before $(i_{t},h_{t})$. This will clearly leave all sequences $\seq[0]{s},\seq[1]{s},\ldots,\seq[f]{s}$, besides ${\seq s}_{t}$, the same, and in particular $\seq[0]{s}=\seq{s}$ and $\seq[f]{s}=\root(\seq s)$ stay the same. This contradicts the minimality in the choice of the process, establishing the claim.
\end{proof}
We now turn to the proof of Theorem~\ref{t21}. \begin{proof}[Proof of Theorem~\ref{t21}] Let $U_{\alpha}$ denote the set of $\left\lfloor 2^{n\alpha}\right\rfloor $ sequences that have the smallest duplication distances to their roots among binary sequences of length $n$ and recall that $F_{\alpha}=$ $\max\left\{ f(\seq{s}):\seq{s}\in U_{\alpha}\right\} $. By Claim~\ref{clm:normal}, for each of the sequences $\seq{s}$ of $U_{\alpha}$, there is a normal deduplication process $\seq s=\seq[0]s\reducesto \seq[1]s\reducesto \seq[2]s\reducesto\dotsm\reducesto \seq[f]s$ of length $f\leq F_{\alpha}$. Let the steps of this process be $(i_{1},h_{1}),$ $(i_{2},h_{2}),\ldots,(i_{f},h_{f})$. As before, it is clear that knowing the final sequence $\seq[f]{s}$ and all the pairs $(i_{t},h_{t})$, we can reconstruct $\seq{s}$. There are $6$ possibilities for $\seq[f]{s}$. As each step $(i_{t},h_{t})$ reduces the length of the sequence by $h_{t}$, it follows that $\sum_{i=1}^{f}h_{t}<n$ and therefore there are at most ${{n+f} \choose f}$ possibilities for the sequence $(h_{1},h_{2},h_{3},\dots,h_{f})$. In order to record the sequence $(i_{1},i_{2},\ldots,i_{f})$ it suffices to record $i_{1}$ and all the differences $i_{t}-i_{t+1}$ for all $1\leq t<n$. There are less than $n$ possibilities for $i_{1}$, and there are at most $2^{f}$ possibilities for deciding about the set of all indices $t$ for which the difference $i_{t}-i_{t+1}$ is positive. As the process is normal, for each such positive difference, we know that $i_{t+1}+2h_{t+1}\ge i_{t}$, that is $i_{t}-i_{t+1}\le2h_{t+1}$. It follows that the sum of all positive differences, $\sum_{t:i_{t}-i_{t+1}>0}\left(i_{t}-i_{t+1}\right)$, is at most $2\sum_{t}h_{t}<2n$, and hence the number of choices for these differences is at most ${2n+f \choose f}$.

Since $i_{f}\le3$, we have $i_{1}-i_{f}\ge1-3=-2$. So
\begin{align*}
\phantom{xx}\sum_{\mathclap{t:i_{t}-i_{t+1}\le0}}\left(i_{t}-i_{t+1}\right)
=
(i_1-i_f)-\sum_{\mathclap{t:i_{t}-i_{t+1}>0}}\left(i_{t}-i_{t+1}\right)
>-2-2n.
\end{align*}
Therefore, the number of choices for all non-positive differences $i_{t}-i_{t+1}$ is at most ${2n+f+2 \choose f}$. Putting all of these together, and noting that $\left|U_{\alpha}\right|\ge2^{n\alpha}-1$, implies the assertion of Theorem~\ref{t21}.
\end{proof}
Since $\binom{p}{q}\le2^{pH(q/p)}$ for positive integers $0<q<p$~\cite[p.~309]{macwilliams1977theory}, Theorem~\ref{t21} implies that 
$$3\left(2+\frac{F_{\alpha}}{n}\right)H\left(\frac{F_{\alpha}/n}{2+F_{\alpha}/n}\right)+\frac{F_{\alpha}}{n}\ge\alpha+o(1),$$
where $H$ is the binary entropy function, $H(x) = -x \log_2 x -(1-x) \log_2(1-x)$. The expression on the left side of the inequality is strictly increasing in $\frac{F_{\alpha}}{n}$, and it is less than $0.99$ if we substitute $\frac{F_{\alpha}}{n}$ by $0.045$. If we let $\alpha=0.99$, it follows that for sufficiently large $n$, we have $\frac{F_{\alpha}}{n}\ge0.045$, thereby establishing the lower bound in Theorem~\ref{thm:bounds}.

To prove the upper bound in Theorem~\ref{thm:bounds}, we prove the following theorem.

\begin{thm}
\label{thm:ub}The limit $\lim_{n\to\infty}f(n)/n$ exists and for all $n$, $f(n)\le\frac{2}{5}n+15$.\end{thm}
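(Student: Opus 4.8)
The statement has two independent halves: the existence of $\lim_n f(n)/n$, which follows from a soft subadditivity argument, and the explicit slope bound, whose content is the asymptotic rate $2/5$ (the additive $+15$ merely absorbs base cases and a terminal $O(1)$-length reduction). I would treat them separately.

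\textbf{Existence of the limit.} The plan is to prove the near-subadditivity $f(n+m)\le f(n)+f(m)+4$ and then invoke Fekete's lemma. Given any $\seq s$ of length $n+m$, split it as $\seq s=\seq u\seq v$ with $|\seq u|=n$ and $|\seq v|=m$. A deduplication acting on a repeat lying entirely inside the $\seq u$-block never touches $\seq v$, and conversely; so I can first deduplicate the $\seq u$-block down to $\root(\seq u)$ in at most $f(\seq u)\le f(n)$ steps, then the $\seq v$-block down to $\root(\seq v)$ in at most $f(\seq v)\le f(m)$ steps. This reaches $\root(\seq u)\root(\seq v)$, a sequence of length at most $6$, which reduces to its root (necessarily $\root(\seq s)$, since binary roots are unique) in at most $f(6)=4$ further steps by Table~\ref{tab:f(n)}. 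Taking the maximum over all $\seq s$ of length $n+m$ gives $f(n+m)\le f(n)+f(m)+4$, so $a_n:=f(n)+4$ is subadditive, and Fekete's lemma yields that $f(n)/n\to \inf_n a_n/n$ exists.

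\textbf{The slope bound: strategy.} For $f(n)\le \tfrac{2}{5}n+15$ the target is a deduplication process that removes, \emph{on average}, at least $5$ symbols per $2$ deduplications, since the total number of symbols removed en route to the root is $n-|\root(\seq s)|\ge n-3$. I would attempt a recursion of amortized type, $f(n)\le f(n-5)+2$ for all sufficiently long $\seq s$, proved by a case analysis on the leftmost runs, encoded by the run-length vector $\ell_1,\ell_2,\dots,\ell_r$. Two cases are easy: if some run is long, a single deduplication already removes many symbols (a run $0^{\ell}$ with $\ell\ge 10$ admits one deduplication removing $\lfloor\ell/2\rfloor\ge 5$); and if two adjacent groups of runs coincide, the sequence contains a square whose period spans several runs, so one deduplication deletes an entire such period, keeping the per-symbol cost low.

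\textbf{The main obstacle.} The reason the naive per-run strategy fails---and the crux of the proof---is that reducing each run independently costs $\sum_i\lceil\log_2\ell_i\rceil$ deduplications, which for runs of length $2$ or $3$ already exceeds the budget $\tfrac{2}{5}(\ell_i-1)$; indeed the extremal sequences have \emph{short, aperiodic} run patterns, containing no large square, so that every immediately available deduplication has small period and removing $5$ symbols in $2$ steps is impossible \emph{directly}. The key idea I would pursue is that short-period deduplications (periods $1$ and $2$) are not wasteful in isolation but \emph{create} longer squares to be exploited: after merging two runs, the newly adjacent blocks frequently align into a longer repeat. Making this precise requires an \emph{amortized} analysis---for instance a potential function combining the current length with the number of runs---so that the cheap deduplications are paid for by the longer ones they enable and the amortized rate reaches $5$ symbols per $2$ operations. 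Verifying this accounting uniformly over all local run configurations is the hard part; the slack $+15$ then covers the finitely many boundary configurations and the terminal reduction of the resulting $O(1)$-length sequence to $\root(\seq s)$.
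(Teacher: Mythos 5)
Your first half (existence of the limit) is correct and is essentially the paper's own argument: deduplicate the two blocks separately, finish off the concatenation of the two roots in $O(1)$ steps, and apply Fekete's lemma to $f(n)+O(1)$; your constant $4$ in place of the paper's $2$ is immaterial.

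The second half has a genuine gap, and it sits exactly where you flag it. The recursion $f(n)\le f(n-5)+2$ (or its amortized potential-function version) \emph{is} the whole content of the slope bound, and you do not prove it; you only describe a plan and acknowledge that "verifying this accounting uniformly over all local run configurations is the hard part." That difficulty is not a detail to be filled in later --- it is the reason the paper does not argue this way. The hand arguments that do go through are strictly weaker: deduplicating runs one by one gives only $f(n)\le\frac{2}{3}n+o(n)$, and invoking the theorem that every binary sequence of length at least $19$ contains a repeat of length at least $2$ gives $f(n)\le\frac{1}{2}n+o(n)$; the paper records both of these as the bounds obtainable "without resorting to computation." The constant $2/5$ is obtained by a different mechanism: define $f(a,b)$ as the number of deduplication steps sufficient to take any sequence of length $a$ down to length at most $b$; then the windowing inequality $f(n)\le\frac{f(a,b)}{a-b}\,n+\max_{i<a}f(i)$ holds by repeatedly shrinking length-$a$ substrings, and an exhaustive computer search establishes $f(32,12)=8$, giving slope $\frac{8}{20}=\frac{2}{5}$ and additive constant $\max_{i<32}f(i)=15$ from Table~\ref{tab:f(n)}. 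In effect, the uniform case analysis your plan requires is delegated to a machine check over windows of length $32$; no human-verifiable accounting achieving slope $2/5$ is known, and there is reason for skepticism that one exists at this constant, since even slope $1/2$ already needs a nontrivial unavoidable-repeat theorem. As written, your proposal establishes the limit's existence but not the bound $f(n)\le\frac{2}{5}n+15$; note also that the specific constant $15$ is not an arbitrary slack in the paper but precisely $\max_{i<32}f(i)$, so any completed version of your argument would have to produce it (or some explicit substitute) from its own base cases.
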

\begin{proof}
Note that for any positive integers $n$ and $m$, $f(n+m)\leq f(n)+f(m)+2$. Indeed, given a sequences of length $n+m$ we can deduplicate separately its first $n$ bits and its last $m$ bits, getting a concatenation of two square-free sequences (of total length at most $6$). It then suffices to check that each such concatenation can be deduplicated to its root through at most $2$ additional deduplication steps. Therefore, the function $g(n)=f(n)+2$ is subadditive: 
\[
g(n+m)=f(n+m)+2\leq f(n)+f(m)+4=g(n)+g(m).
\]
Now, by Fekete's Lemma~\cite{steel1997probability}, $g(n)/n$ tends to a limit (which is the infimum over $n$ of $g(n)/n$), and it is clear that the limit of $f(n)/n$ is the same as that of $g(n)/n$. We term this limit the \emph{binary duplicatoin constant}.

This proof of the existence of $\lim_{n\to\infty}f(n)/n$ provides a simple way to derive an upper bound for the limit by computing $f(n)$ precisely for some small $n$. In particular, from Table~\ref{tab:f(n)}, we find $\lim_{n\to\infty}f(n)/n\le(f(32)+2)/32=17/32$. We can improve upon this result as follows.

For positive integers $n,m$, let $f(n,m)$ be the smallest number $k$ such that every sequence of length $n$ can be converted to a sequences of length at most $m$ via $k$ deduplication steps. A sequence of length $n$ can be converted to its root by first repeatedly converting its $a$-substrings to substrings of length at most $b$ via $f(a,b)$ deduplication steps. Thus for integers $a>b>0$, we have
\begin{equation}
f(n)\le\frac{f(a,b)}{a-b}n+\max_{i<a}f(i)\label{eq:fnm}
\end{equation}
With the help of a computer we find the values of $f(n,m)$ for $3\le m<n\le32$. An illustration is given in Figure~\ref{fig:bestRatio}. In particular we have $\frac{f\left(32,12\right)}{20}=\frac{8}{20}=\frac{2}{5}$ from Figure~\ref{fig:bestRatio} and $\max_{i<32}f(i)=15$ from Table~\ref{tab:f(n)}, implying $f(n)\le\frac{2}{5}n+15$.
\end{proof}
\begin{figure}
\centering{}\includegraphics[angle=90,width=0.7\textwidth]{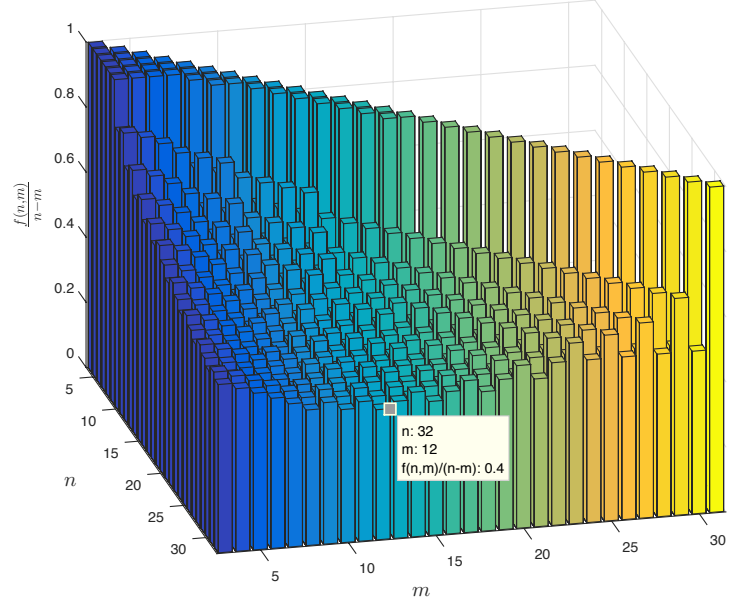}\caption{$\frac{f(n,m)}{n-m}$ for $3\le m<n\le32$.}
\label{fig:bestRatio}
\end{figure}

Weaker upper bounds on $f(n)$ can be obtained without resorting to computation in the following ways. First, to deduplicate a sequence to its root, we first can deduplicate each block of $t$ consecutive identical bits to a single bit by $\lceil\log_{2}t\rceil$ deduplications and then finish in less than $\log_{2}n$ additional steps. This shows that for large $n$ , $f(n)\leq\frac{2}{3}n+o(n)$ (the extremal case for this argument is the one in which each block is of size $3$). Second, it is known that every binary sequence of length at least 19 contains a repeat of length at least 2 \cite{entringer1974nonrepetitive}, implying that $f(n)\leq\frac{1}{2}n+o(n)$.

\paragraph*{Parallel duplication }
One can also define the parallel duplication distance to the root by allowing non-overlapping duplications to occur simultanously, with $f'(n)$ being the maxmimum parallel duplication distance to the root of a sequence of length $n$. Similar to the normal duplication distance it is helpful to think in terms of deduplications. Since each parallel deduplication step decreases the length of a sequence by at most a factor of $2$, $f'(n)>\log_{2}n-2$ (and in fact $f'(\seq{s})\geq\log_{2}n-2$ for every sequence of length $n$.) It is not difficult to see that $f'(n)<2\log_{2}n$ by first deduplicating, in parallel, all blocks of identical elements in the sequence to blocks of size $1$, and then by deduplicating the resulting alternating sequence to its root.

\paragraph*{Partial deduplication}
The definition of $f(n,m)$ gives rise to the following question: For a fixed $0<\alpha\le1$, what is $\lim_{n}\frac{f\left(n,\left\lfloor \alpha n\right\rfloor \right)}{1-\alpha}$, if it exists? At first glance, one may expect $\lim_{n}\frac{f\left(n,\left\lfloor \alpha n\right\rfloor \right)}{1-\alpha}$ to be decreasing in $\alpha$ since if $\alpha$ is large, one may think it is easier to find enough long repeats to reduce the length of the sequence quickly by a factor of $1-\alpha$. However, we show that $\lim_{n}\frac{f\left(n,\left\lfloor \alpha n\right\rfloor \right)}{n\left(1-\alpha\right)}=\lim_{n}\frac{f(n)}{n}$. 

Let $\gamma=\lim_n\frac{f(n)}{n}$. For $\epsilon>0$, there exists $k$ such that for all $n>k$, $f(n)\le(\gamma+\epsilon)n$. Thus
\begin{equation}\label{eq:f1}
f(n,\lfloor\alpha n\rfloor)\le f\left(n-\lfloor\alpha n\rfloor+3\right)\le(\gamma+\epsilon)\left({\left(1-\alpha\right)n}+4\right).
\end{equation}
On the other hand, let $\delta = \liminf_n \frac{f(n,\lfloor\alpha n\rfloor)}{\left(1-\alpha\right)n}$. For $\epsilon>0$, there exists $k$ such $f(k,\lfloor\alpha k\rfloor)\le (\delta+\epsilon)(1-\alpha)k$. Hence,
\begin{equation}\label{eq:f2}
f(n)\le\frac{f(k,\lfloor\alpha k\rfloor)}{k-\lfloor\alpha k\rfloor}n+k\le (\delta+\epsilon)n+k.
\end{equation}
The result follows by dividing~\eqref{eq:f1} by $(1-\alpha)n$ and taking a $\limsup_n$ and by dividing~\eqref{eq:f2} by $n$ and taking a $\lim_n$.



\section{Duplication Distance for L-systems\label{sec:Lsystems}}
\emph{L-systems}, or Lindenmayer systems are sequence rewriting systems developed by Lindemayer in 1968 \cite{lsystems}. He used them in the context of biology to model the growth process of plant development. He introduced context-free as well as context-sensitive L-systems. Here we will discuss distance to the root for sequences arising in context-free L-systems, also known as 0L-systems. A 0L-system comprises three components: 
\begin{itemize}
\item Alphabet ($\Sigma$): An alphabet of symbols used to construct sequences. 
\item Axiom sequence or initiator ($\seq\omega$): The starting sequence from which a 0L-system is constructed. 
\item Production rule ($h$): A rule that constructs new sequences by expanding each symbol in a given sequence into a sequence of symbols. The production rule is represented by the function $h: \Sigma^* \rightarrow \Sigma^*$, which for any two sequences  $\seq a$ and $\seq b \in \Sigma^*$ satisfies
$$h(\seq a\seq b) = h(\seq a)h(\seq b)$$
 where $h(\seq a)h(\seq b)$ represents the concatenation of $h(\seq a)$ and $h(\seq b)$. The production rule $h$ can be deterministic or stochastic. Here we consider only deterministic rules.  Such 0L-systems with deterministic $h$ are denoted as D0L-systems~\cite{DOL}.
\end{itemize}
\begin{exm}[\emph{Fibonacci words}]
Consider $\Sigma = \{X,Y\}$, $\seq\omega = X$, and \[h(X) = XY, \quad h(Y) = X.\] For this D0L-system, the first $5$ sequences are as follows:
\begin{align*}
h^0(\seq\omega) &= X\\
h^1(\seq\omega) &=  XY\\
h^2(\seq\omega) &= XYX\\
h^3(\seq\omega) &= XYXXY\\
h^4(\seq\omega) &= XYXXYXYX\\
h^5(\seq\omega) &= XYXXYXYXXYXXY
\end{align*}
This can also be represented by the following tree:

\Tree[.X [.X [.X [.X [.X ]
               [.Y ]]
          [.Y 
                [.X  
                          ]]]
                [.Y [.X [.X ]
               [.Y ]]
          ]]
          [.Y [.X [.X [.X ]
               [.Y ]]
          [.Y [.X ]
               ]]
                ]]      
          
These sequences are called Fibonacci words as they satisfy $$h^n(\seq\omega) = h^{n-1}(\seq\omega)h^{n-2}(\seq\omega)~ \forall ~ n\geq 2.$$
\end{exm}

\begin{exm}[\emph{Thue-Morse Sequence}] 
Let $\Sigma = \{0,1\}$, $\seq\omega = 0$, and \[ h(0) = 01,\quad h(1) = 10.\] For this D0L-system the tree of sequence generation is given below:

\Tree[.0 [.0 [.0 [.0 [.0 ]
               [.1 ]]
          [.1 [.1 ]
                [.0  
                          ]]]
                [.1 [.1 [.1 ]
               [.0 ]]
          [.0 [.0 ]
                [.1  
                          ]]]]
          [.1 [.1 [.1 [.1 ]
               [.0 ]]
          [.0 [.0 ]
                [.1  
                          ]]]
                [.0 [.0 [.0 ]
               [.1 ]]
          [.1 [.1 ]
                [.0  
                          ]]]]]

The sequence generated by this D0L-system are called Thue-Morse sequences. Alternatively, the Thue-Morse sequences can be defined recursively by starting with $\seq[0]t=0$ and forming $\seq[i+1]t$ by concatenating $\seq[i]t$ and its complement $\overline{\seq[i]t}$. 
\end{exm}

 We show that binary D0L-systems, which have production rules of the form $h(0)=\seq{u}$ and $h(1)=\seq{v}$, with $\seq{u},\seq{v}\in\left\{ 0,1\right\} ^{*}$ have a logarithmic distance to their roots. 
\begin{restatable}{lem}{lemdzerol}\label{lem:d0L}
For any binary D0L-system with initiator $\seq\omega $
and production rule $h$, we have 
\[f\left(h^{r}(\seq\omega )\right)=\Theta\left(\log_2\left|h^{r}(\seq\omega )\right|\right),\qquad\text{as }r\to\infty.\]\end{restatable}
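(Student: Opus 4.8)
The lower bound requires no new work. Since every binary root has length at most $3$, the general estimate $\log\frac{|\seq s|}{|\seq\sigma|}\le f(\seq s)$ stated at the start of the Results section gives $f(h^{r}(\seq\omega))\ge \log_2|h^{r}(\seq\omega)|-\log_2 3=\Omega(\log_2|h^{r}(\seq\omega)|)$. So the entire task is the matching upper bound $f(h^{r}(\seq\omega))=O(\log_2|h^{r}(\seq\omega)|)$.

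The engine for the upper bound is a \emph{lifting} observation together with a level-by-level deduplication that keeps everything bounded. The observation is that, for any binary string $\seq w$ and any $j$, a deduplication of $\seq w$ lifts to a genuine deduplication of $h^{j}(\seq w)$: if $\seq w=\seq x\seq y\seq y\seq z$ then $h^{j}(\seq w)=h^{j}(\seq x)\,h^{j}(\seq y)\,h^{j}(\seq y)\,h^{j}(\seq z)$, so $h^{j}(\seq y)h^{j}(\seq y)$ is a true tandem repeat whose removal yields $h^{j}(\seq x\seq y\seq z)$. Hence an entire deduplication process reducing a pattern $\seq w$ to its root $\root(\seq w)$ can be carried out, step for step, on $h^{j}(\seq w)$, at a cost of $f(\seq w)$ real deduplications. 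I will exploit this as follows. Set $L=\max(|\seq u|,|\seq v|,|\seq\omega|)$, write $W_r=h^{r}(\seq\omega)$, and process levels $j=r,r-1,\dots,1$. Maintaining the invariant that the current string equals $h^{j}(\tau_j)$ with $|\tau_j|\le 3$ (true after the first step, since roots of binary strings have length at most $3$), I rewrite $h^{j}(\tau_j)=h^{j-1}\!\bigl(h(\tau_j)\bigr)$ and apply the lifting observation to deduplicate the short pattern $h(\tau_j)$, whose length is at most $3L$, down to its root $\tau_{j-1}=\root\!\bigl(h(\tau_j)\bigr)$. This costs at most $f(3L)\le 3L=O(1)$ real deduplications and leaves $h^{j-1}(\tau_{j-1})$ with $|\tau_{j-1}|\le 3$. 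After $r$ such steps the string is reduced to $\tau_0=\root(W_r)$. The crucial point is that re-reducing the pattern to length $\le 3$ at \emph{every} level keeps the per-level cost constant; this is exactly what defeats the naive ``reduce each block separately'' recursion, which multiplies the cost by a constant factor per level and blows up to $O(L^{r})$. We thus obtain $f(h^{r}(\seq\omega))\le 3Lr+O(1)=O(r)$, with the constant depending only on the system.

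It remains to reconcile the $O(r)$ bound with the target $O(\log_2|h^{r}(\seq\omega)|)$, and here a dichotomy on the growth rate enters. The length $n_r=|h^{r}(\seq\omega)|$ is governed by $M^{r}$ applied to the Parikh vector of $\seq\omega$, where $M$ is the $2\times2$ nonnegative integer substitution matrix recording how many $0$'s and $1$'s occur in $\seq u$ and $\seq v$. By elementary linear algebra (Perron--Frobenius for the dominant eigenvalue $\lambda$, together with the fact that a $2\times2$ Jordan block has size at most $2$), either (i) $\lambda>1$, in which case $n_r=\Theta(\lambda^{r})$ (up to a possible factor of $r$) so that $r=\Theta(\log_2 n_r)$ and the bound $O(r)$ \emph{is} $O(\log_2 n_r)$; or (ii) $\lambda=1$, in which case $n_r=\Theta(r)$ at most (linear growth). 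In case (ii) the $O(r)$ bound is too weak and I handle the sequence directly: linear growth forces one letter to be essentially $h$-invariant, so $h^{r}(\seq\omega)$ consists of a bounded number of runs, each of length $O(r)=O(n_r)$; reducing each run to a single symbol by repeated halving costs $O(\log_2 r)$ deduplications per run over $O(1)$ runs, and the remaining alternating string of length $O(1)$ is finished in $O(1)$ steps, for a total of $O(\log_2 r)=O(\log_2 n_r)$. Degenerate erasing systems with bounded $n_r$ are trivial. Combining either case with the lower bound yields $f(h^{r}(\seq\omega))=\Theta(\log_2|h^{r}(\seq\omega)|)$.

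The main obstacle is this last reconciliation rather than the construction itself. The lifting observation and the bounded-pattern bookkeeping are routine once set up, and together they cleanly give $O(r)$; but $O(r)$ matches the logarithm of the length \emph{only} in the exponential regime, so the sub-exponential (linear-growth) case genuinely requires a separate, run-based argument justified by the growth classification of the $2\times2$ substitution matrix. Making the claim ``linear growth implies a bounded number of runs'' fully rigorous through the matrix structure is the one place where some care is needed; everything else is a constant-factor accounting over the $r$ levels.
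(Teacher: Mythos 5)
Your proposal is correct and takes essentially the same route as the paper: your lifting observation is precisely the paper's claim that $f\left(h^{r}(\seq\omega)\right)\le f\left(h^{r-1}(\seq\omega)\right)+c$ with $c=\max_{\seq z}f(h(\seq z))$ over roots $\seq z$ (you merely unroll that recursion top-down with an explicit invariant), and your growth-rate dichotomy matches the paper's case split into $\left|h^{r}(\seq\omega)\right|=O(1)$, $2^{\Omega(r)}$, and unbounded-but-$2^{o(r)}$, where the paper also resolves the last case by showing the number of $1$s (hence runs) is constant. Your substitution-matrix/Perron--Frobenius justification is just a more formal rendering of the paper's direct characterization, with the minor caveat that $\lambda>1$ need not give $n_r=\Theta(\lambda^r)$ for every initiator (e.g., $h(0)=0$, $h(1)=11$, $\seq\omega=0$ gives bounded length, and bounded length can occur without erasing rules), but such bounded cases are trivial, as you note.
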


\begin{proof}
For any sequence $\seq t$, since $f(\seq t)\ge\log_2|\seq t|$, we have $f\left(h^{r}(\seq \omega)\right)\ge\log_2\left|h^{r}(\seq \omega)\right|$. It remains to show that $f\left(h^{r}(\seq \omega)\right)=O\left(\log_2\left|h^{r}(\seq \omega)\right|\right)$. We start by proving the following claim.

\begin{claim*}For any binary D0L-system with initiator $\omega$ and production rule $h$, we have
\begin{equation}
f\left(h^{r}(\seq \omega)\right)\le f\left(h^{r-1}(\seq \omega)\right)+c\le f(\seq \omega)+rc,\label{eq:L-sys}
\end{equation}
where $c=\max_{\seq{z}\in\left\{ 0,1,01,10,010,101\right\} }f\left(h(\seq{z})\right)$. 
\end{claim*}

To prove the claim, let $\seq{x}=h^{r-1}(\seq{\omega})$ and $\seq y=h^{r}(\seq{\omega})$ and consider the sequence of deduplications that turns $x$ into its root $\seq{z}\in\left\{ 0,1,01,10,010,101\right\} $. We can deduplicate $\seq{y}$ in a similar manner to $h(\seq{z})$: For each step in the deduplication process of $\seq{x}$ that deduplicates a substring $a_{1}\dotsm a_{k}a_{1}\dotsm a_{k}$ to $a_{1}\dotsm a_{k}$, we deduplicate $h\left(a_{1}\right)\dotsm h\left(a_{k}\right)h\left(a_{1}\right)\dotsm h\left(a_{k}\right)$ to $h\left(a_{1}\right)\dotsm h\left(a_{k}\right)$ in the deduplication process of $\seq{y}$, resulting eventually in $h(\seq{z})$. This completes the proof of the claim.

We now turn to proving $f\left(h^{r}(\seq{\omega})\right)=O\left(\log_2\left|h^{r}(\seq{\omega})\right|\right)$. If $\left|h^{r}(\seq{\omega})\right|=O(1)$, then $f\left(h^{r}(\seq{\omega})\right)=O(1)$ as well, and there is nothing to prove. If $\left|h^{r}(\seq{\omega})\right|=2^{\Omega(r)}$, then $r=O\left(\log_2\left|h^{r}(\seq{\omega})\right|\right)$ and the desired result follows from~(\ref{eq:L-sys}). The last case that we need to consider is when $\left|h^{r}(\seq{\omega})\right|\to\infty$ but $\left|h^{r}(\seq{\omega})\right|=2^{o(r)}$. Without loss of generality, assume $\left|h(1)\right|\ge\left|h(0)\right|$. Then the condition $\left|h^{r}(\seq{\omega})\right|=2^{o(r)}$ can be shown to occur only if the initiator $\seq{\omega}$ contains at least one occurrence of $1$, $h(0)=0$, and $h(1)$ has exactly one occurrence of $1$ and one or more 0s. In this case, the number of 1s in $h^{r}(\seq{\omega})$ is constant and again $f\left(h^{r}(\seq{\omega})\right)=O\left(\log_2\left|h^{r}(\seq{\omega})\right|\right)$.
\end{proof}

The previous lemma shows that the duplication distances to the root for both of Fibonacci words and Thue-Morse sequences are logarithmic in sequence length. This is particularly interesting in the case of the Thue-Morse sequence. Despite the fact that the Thue-Morse sequence grows by taking the complement, it contains enough repeats to allow a logarithmic distance. Note also that the Thue-Morse sequence is used to generate ternary square-free sequences. 

In the next lemma, we give better bounds than those that can be obtained from Lemma~\ref{lem:d0L} or~(\ref{eq:L-sys}) for Thue-Morse and Fibonacci  sequences.
\begin{lem}\label{lem:TMF}
Let $\seq[r]t$ and $\seq[r]u$ denote the $r$th Thue-Morse and Fibonacci words, respectively. For $r\ge2$, we have
\begin{align*}
f\left(\seq[r]t\right) & \le2r,\\
f\left(\seq[r]u\right) & \le r.
\end{align*}
\end{lem}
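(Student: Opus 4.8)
The plan is to prove each bound by exhibiting an explicit efficient deduplication process that exploits the self-similar recursive structure of these two families. The key observation is that both $\seq[r]t$ and $\seq[r]u$ are built from earlier words in the family, so rather than invoking the generic Claim bound $f(h^r(\seq\omega))\le f(h^{r-1}(\seq\omega))+c$, I would track the actual sequence and show that one can collapse it down to its predecessor (or close to it) using very few deduplication steps per level. This converts the recursive construction directly into a recursion on $f$ of the form $f(\seq[r]t)\le f(\seq[r-1]t)+(\text{constant})$, and iterating yields the linear-in-$r$ bound.

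For the Fibonacci words, recall that $\seq[r]u=h^{r-1}(\seq[1]u)$ with $\seq[r]u=\seq[r-1]u\,\seq[r-2]u$ for $r\ge2$ (reading off the recurrence $h^n(\seq\omega)=h^{n-1}(\seq\omega)h^{n-2}(\seq\omega)$). The natural idea is that $\seq[r]u$ begins with $\seq[r-1]u$ and is followed by $\seq[r-2]u$, which is itself a prefix of $\seq[r-1]u$; this adjacency of an initial segment and its own prefix should allow a single deduplication (or a bounded number of them) to remove the tail, reducing $\seq[r]u$ essentially to $\seq[r-1]u$. I would make this precise by identifying the longest repeat straddling the $\seq[r-1]u\mid\seq[r-2]u$ boundary and showing one deduplication of length $|\seq[r-2]u|$ sends $\seq[r]u$ to $\seq[r-1]u$ whenever the suffix of $\seq[r-1]u$ equals $\seq[r-2]u$. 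This gives $f(\seq[r]u)\le f(\seq[r-1]u)+1$, and with a small base case verified by hand (using Table~\ref{tab:f(n)} or direct computation for $r=2,3$) the telescoping sum produces $f(\seq[r]u)\le r$.

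For the Thue-Morse words, the analogous structure is $\seq[r]t=\seq[r-1]t\,\overline{\seq[r-1]t}$, where the second half is the \emph{complement} rather than a copy, so there is no immediate repeat at the midpoint. Here I would instead exploit the finer factorization $\seq[r]t=h^{r-2}(0110)=h^{r-2}(01)\,h^{r-2}(10)$ and look one level deeper: Thue-Morse words contain many short repeats (squares of length $1$ and $2$), and the bound $2r$ suggests a budget of exactly two deduplications per recursion level. I would aim to show that $\seq[r]t$ can be deduplicated to $\seq[r-1]t$ in at most two steps by locating a pair of overlapping or adjacent squares created at the midpoint when the block $\seq[r-1]t\,\overline{\seq[r-1]t}$ is read through the production rule; alternatively, one can track the run-length encoding, since applying $h$ to the alternating structure controls how runs merge, and bound the deduplications needed to restore the predecessor's pattern. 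Either way the goal is a recursion $f(\seq[r]t)\le f(\seq[r-1]t)+2$, which telescopes to $2r$ after checking the base case $r=2$.

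The main obstacle will be the Thue-Morse bound, precisely because the complement operation destroys the obvious overlap that makes the Fibonacci argument almost immediate. The hard part is to argue that, despite growing by complementation, each new Thue-Morse level introduces only a constant amount of ``new'' structure that must be deduplicated away, and to pin down exactly which two repeats to remove at each step so that the result is genuinely the previous Thue-Morse word (and not merely some square-free-reachable sequence of the same root). I expect the cleanest route is to commit to a specific bounded deduplication recipe at the midpoint, verify it transforms $\seq[r]t$ into $\seq[r-1]t$ by a direct comparison of the two words' factorizations under $h$, and then let induction carry the constant-per-level cost; the Fibonacci case should follow by the same telescoping scheme with the per-level cost lowered to one.
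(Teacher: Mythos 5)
Your proposal has genuine gaps in both halves: the structural claims underlying your per-level recursions are false, and the recursion inequalities themselves fail at small indices. For Fibonacci, your one-step reduction requires $\seq[r-2]u$ to be a suffix of $\seq[r-1]u$, but this \emph{never} holds: with $\seq[0]u=0$, $\seq[1]u=01$, $\seq[r]u=\seq[r-1]u\seq[r-2]u$, the word $\seq[r]u$ ends in $01$ for odd $r$ and in $10$ for even $r$, so consecutive Fibonacci words always disagree in their last two letters (this is exactly the well-known ``near-commutativity'' defect: $\seq[r-1]u\seq[r-2]u$ and $\seq[r-2]u\seq[r-1]u$ differ precisely in the final two positions). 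Concretely, $\seq[3]u=01001$ contains a single repeat ($00$), so its unique deduplication yields $0101\neq\seq[2]u=010$; moreover $f(\seq[3]u)=2$ while $f(\seq[2]u)+1=1$, so even the bare inequality $f(\seq[r]u)\le f(\seq[r-1]u)+1$ is false. The Thue--Morse half fails more decisively: $\seq[2]t=0110$ is not reachable from $\seq[3]t=01101001$ by \emph{any} sequence of deduplications (an exhaustive walk of the deduplication graph of $\seq[3]t$ shows its only descendant of length $4$ is $0101$), and $f(\seq[3]t)=4>f(\seq[2]t)+2=3$, so $f(\seq[r]t)\le f(\seq[r-1]t)+2$ is also false at $r=3$. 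Hence no recipe of the kind you commit to---landing exactly on the previous word of the family---can exist.

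The repair, which is the paper's route, is to give up on returning to the previous word and instead descend \emph{two} levels at a time through intermediate sequences that are not themselves Thue--Morse or Fibonacci words. For Thue--Morse, write $\seq[r]t=\seq[r-2]t\,\overline{\seq[r-2]t}\;\overline{\seq[r-2]t}\,\seq[r-2]t$: the complement structure creates a square in the middle, one deduplication leaves $\seq[r-2]t\,\overline{\seq[r-2]t}\,\seq[r-2]t$, and expanding one level further exhibits three more squares whose removal lands exactly on $\seq[r-2]t$, giving $f(\seq[r]t)\le 4+f(\seq[r-2]t)$. For Fibonacci, three applications of the recursion give $\seq[r]u=\seq[r-2]u^{2}\,\seq[r-5]u\,\seq[r-4]u$; deduplicating the square $\seq[r-2]u^{2}$ and then using the identity $\seq[r-2]u\,\seq[r-5]u\,\seq[r-4]u=\seq[r-3]u^{2}\,\seq[r-4]u$ allows a second deduplication landing on $\seq[r-3]u\,\seq[r-4]u=\seq[r-2]u$, giving $f(\seq[r]u)\le 2+f(\seq[r-2]u)$; telescoping with small base cases yields the stated bounds. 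Note that your intuition about which case is hard is inverted: the complementation is exactly what produces the free square at depth two in $\seq[r]t$, whereas Fibonacci needs the extra algebraic identity to repair the two-letter mismatch that defeats your suffix argument.
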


\begin{proof}
We first prove the upper bound for ${\seq t}_{r}$. For $r\ge3$, we have 
\begin{align*}
f\left({\seq t}_{r}\right) & =f\left({\seq t}_{r-1}\overline{{\seq t}}_{r-1}\right)\\
 & =f\left({\seq t}_{r-2}\overline{{\seq t}}_{r-2}\overline{{\seq t}}_{r-2}{\seq t}_{r-2}\right)\\
 & \le1+f\left({\seq t}_{r-2}\overline{{\seq t}}_{r-2}{\seq t}_{r-2}\right)\\
 & =1+f\left({\seq t}_{r-3}\overline{{\seq t}}_{r-3}\overline{{\seq t}}_{r-3}{\seq t}_{r-3}{\seq t}_{r-3}\overline{{\seq t}}_{r-3}\right)\\
 & \le3+f\left({\seq t}_{r-3}\overline{{\seq t}}_{r-3}{\seq t}_{r-3}\overline{{\seq t}}_{r-3}\right)\\
 & \le4+f\left({\seq t}_{r-3}\overline{{\seq t}}_{r-3}\right)\\
 & =4+f\left({\seq t}_{r-2}\right).
\end{align*}
If $r\ge3$ is even, then $f\left({\seq t}_{r}\right)\le4\frac{r-2}{2}+f\left({\seq t}_{2}\right)=2\left(r-2\right)+1=2r-3$; and if $r\ge3$ is odd, then $f\left({\seq t}_{r}\right)\le4\frac{r-1}{2}+f\left({\seq t}_{1}\right)=2\left(r-1\right)$. This completes the proof of the first claim.

We now turn to $f\left({\seq u}_{r}\right)$. The $r$th Fibonacci word can be obtained via the following recursion: ${\seq u}_{r}={\seq u}_{r-1}{\seq u}_{r-2}$ for $r\ge2$ and ${\seq u}_{0}=0$, ${\seq u}_{1}=01$. If $r\ge5$, then
\[
\begin{split}{\seq u}_{r} & ={\seq u}_{r-1}{\seq u}_{r-2}\\
 & ={\seq u}_{r-2}{\seq u}_{r-3}{\seq u}_{r-3}{\seq u}_{r-4}\\
 & ={\seq u}_{r-2}{\seq u}_{r-3}{\seq u}_{r-4}{\seq u}_{r-5}{\seq u}_{r-4}\\
 & ={\seq u}_{r-2}^{2}{\seq u}_{r-5}{\seq u}_{r-4}.
\end{split}
\]
Hence, $f\left({\seq u}_{r}\right)\le1+f\left({\seq u}_{r-2}{\seq u}_{r-5}{\seq u}_{r-4}\right)$. Noting that ${\seq u}_{r-2}{\seq u}_{r-5}{\seq u}_{r-4}={\seq u}_{r-3}{\seq u}_{r-4}{\seq u}_{r-5}{\seq u}_{r-4}={\seq u}_{r-3}^{2}{\seq u}_{r-4}$, we write
\begin{align*}
f\left({\seq u}_{r}\right) & \le1+f\left({\seq u}_{r-2}{\seq u}_{r-5}{\seq u}_{r-4}\right)\\
 & =1+f\left({\seq u}_{r-3}^{2}{\seq u}_{r-4}\right)\\
 & \le2+f\left({\seq u}_{r-3}{\seq u}_{r-4}\right)\\
 & =2+f\left({\seq u}_{r-2}\right).
\end{align*}
Now, if $r\ge5$ is even, then $f\left({\seq u}_{r}\right)\le\left(r-4\right)+f\left({\seq u}_{4}\right)\le r-2$ since $f\left({\seq u}_{4}\right)=f\left(01001010\right)\le2$; and if $r\ge5$ is odd, then $f\left({\seq u}_{r}\right)\le\left(r-3\right)+f\left({\seq u}_{3}\right)\le r-1$ as $f\left({\seq u}_{3}\right)=f\left(01001\right)\le2$.
\end{proof}

\section{Approximate-duplication distance}\label{sec:mismatch}
Recall that $f_{\beta}(n)$ is the least $k$ such that every sequence of length $n$ can be converted to a square-free sequence in $k$ approxmiate deduplication steps, with at most a $\beta$ fraction of mismatches in each step. In this section,  we provide bounds on $f_{\beta}(n)$ for $\beta<1/2$ and $\beta>1/2$. We first however present some useful definitions. 

For $0\le\beta<1$, a \emph{$\beta$-repeat of length} $h$ in a binary sequence consists of two consecutive blocks in the sequence, each of length $h$, such that the Hamming distance between them is at most $\beta h$. If $\seq{u}\seq{v}\seq{v}'\seq{w}$ is a binary sequence, and $\seq{v}\seq{v}'$ is a $\beta$-repeat, then a $\beta$\emph{-deduplication} produces $\seq{u}\seq{v}\seq{w}$ or $\seq{u}\seq{v}'\seq{w}$. Note that in this case the set of roots of $\seq s$ is not necessarily unique, but the length of any root is at most 3, even if $\beta=0.$ 

The next theorem establishes a sharp phase transition in the behavior of $f_{\beta}(n)$ at $\beta=1/2$. Its proof relies on Theorem~\ref{thm:rep_beta}, which guarantees the existence of $\beta$-repeats under certain conditions. In what follows, for an integer $m$, we use $[m]$ to denote $\{1,\dotsc,m\}$.

\thmbetalessthanhalf* 
\begin{proof}
	The proof for $\beta<1/2$ is similar to the proof of the lower bound in Theorem~\ref{thm:bounds}. In this case however, to make the deduplication process reversible, for every deduplication we need to record whether it is of the form $\seq u \seq v\seq v'\seq w\reducesto \seq u \seq v\seq w$ or of the form $\seq u \seq v'\seq v\seq w\reducesto \seq u \seq v\seq w$, and we must also encode the sequence $\seq v'$. In the $t$th deduplication step, we have $|\seq v|=|\seq v'|=h_{t}$. Since $\seq v'$ is in the Hamming sphere of radius $\beta h_{t}$ around $\seq v$, there are at most $2^{h_{t}H(\beta)}$ options for $\seq v'$~\cite[Lemma 4.7]{roth2006introduction}. Thus
\[
6n\sum_{f=1}^{F_{\beta}}\binom{n+f}{f}\binom{2n+f}{f}\binom{2n+f+2}{f}2^{nH(\beta)}2^{2f}\ge2^{n},
\]
where $F_{\beta}=f_{\beta}(n)$ and we have used $\sum_{t}h_t\le n$. The desired result then follows since $H(\beta)<1$.

Suppose $\beta>1/2$. Let $K=\left\lceil \frac{2\beta+1}{2\beta-1}\right\rceil ^{2}$ and $\epsilon=C-K$. Note that $\epsilon>0$. By appropriately choosing $C_{1}$, we can have $f_{\beta}(i)\le\left(K+\frac{\epsilon}{2}\right)\ln i+C_{1}$ for all $i<M$, where $M$ is sufficiently large and in particular $M>K$. Assuming that this holds also for all $i<n$, where $n\ge M$, we show that it holds for $i=n$. From Theorem~\ref{thm:rep_beta}, every binary sequence $\seq s$ of length $n$ has a $\beta$-repeat of length $\ell\lfloor n/K\rfloor$ for some $\ell\in\left[\sqrt{K}\right]$, implying 
\begin{align*}
f_{\beta} (\seq s)
 &\le
f_{\beta}\left(n-\ell\left\lfloor\frac{n}{K}\right\rfloor\right)+1\\
& \le\left(K+\frac{\epsilon}{2}\right)\ln\left(n-\left\lfloor\frac{n}{K}\right\rfloor\right)+1+C_{1}\\
 & \le\left(K+\frac{\epsilon}{2}\right)\ln n-\frac{\left(K+\frac{\epsilon}{2}\right)\left(n-K\right)}{Kn}+1+C_{1}\\
 & \le\left(K+\frac{\epsilon}{2}\right)\ln n+C_{1}\\
&\le C\ln n,
\end{align*}
where the last two steps hold for sufficiently large $n$. Hence, $f_\beta(n)\le C\ln n$.
\end{proof}
\begin{thm}
\label{thm:rep_beta}If $\beta>\frac{1}{2}$, then for any integer $k\ge\frac{2\beta+1}{2\beta-1}$, any binary sequence of length $n$ contains a $\beta$-repeat of length $\ell\lfloor n/k^{2}\rfloor$ for some $\ell\in[k]$.\end{thm}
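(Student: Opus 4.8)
The plan is to translate the existence of a $\beta$-repeat into a statement about the autocorrelation of the $\pm1$-valued version of the sequence, and then extract the repeat from the elementary fact that autocorrelations cannot all be strongly negative. Write $m=\lfloor n/k^{2}\rfloor$ and $y_{i}=(-1)^{s_{i}}\in\{-1,+1\}$, and for a lag $h$ set $R_{h}=\sum_{i=1}^{n-h}y_{i}y_{i+h}$. The number of positions $i\le n-h$ with $s_{i}\neq s_{i+h}$ is then $T_{h}=\tfrac12\bigl((n-h)-R_{h}\bigr)$, so small $T_h$ (equivalently, $R_h$ not too negative) is what produces a low-distance pair of adjacent blocks.

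First I would record the reduction. Partition $[n]$ into the $N=\lfloor n/h\rfloor$ consecutive length-$h$ windows $W_{1},\dots,W_{N}$ and examine the $N-1$ adjacent pairs $(W_{j},W_{j+1})$; each pair is a candidate $\beta$-repeat of length $h$, and the Hamming distance between its halves equals $\sum_{i\in W_{j}}(s_{i}\oplus s_{i+h})$. Since the index sets $W_{j}$ are disjoint, these distances sum to at most $T_{h}$, so the smallest is at most $T_{h}/(N-1)$. Hence a $\beta$-repeat of length $h$ exists as soon as $T_{h}\le\beta h(N-1)$, and using $N-1\ge n/h-2$ this is implied by $R_{h}\ge(1-2\beta)n+O(h)$. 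Because $\beta>\tfrac12$ the right-hand side is a large negative number, so the whole claim reduces to exhibiting one lag $h=\ell m$ with $\ell\in[k]$ whose autocorrelation is not too negative.

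Then I would argue by contradiction: suppose no $\beta$-repeat of length $\ell m$ exists for any $\ell\in[k]$, so $R_{\ell m}<(1-2\beta)n+O(\ell m)$ for all such $\ell$. To contradict this I would use the nonnegativity of a single sum of squares. Setting $P_{i}=\sum_{\ell=0}^{k}y_{i+\ell m}$ and summing over $1\le i\le n-km$ gives
\[ 0\le\sum_{i=1}^{n-km}P_{i}^{2}=(k+1)(n-km)+2\sum_{d=1}^{k}(k+1-d)\,R_{dm}+E, \]
where $|E|=O(k^{3}m)$ collects the boundary discrepancies between the partial autocorrelations that actually occur in $P_{i}^{2}$ and the full sums $R_{dm}$. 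Feeding the assumed upper bounds into the positively weighted sum gives an upper bound of order $(1-2\beta)\tfrac{k(k+1)}{2}n$, while the square forces the same quantity to be at least $-\tfrac{k+1}{2}n$ up to lower-order corrections. Comparing the two, after normalizing by $\tfrac{(k+1)n}{2}$, produces a contradiction precisely when $(2\beta-1)k$ exceeds a constant near $1$, i.e. in the regime $k\ge\frac{2\beta+1}{2\beta-1}$; the offending $\ell=d$ then supplies the required $\beta$-repeat of length $\ell m=\ell\lfloor n/k^{2}\rfloor$.

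I expect the main obstacle to be bookkeeping the lower-order terms sharply enough to land on the stated constant $\frac{2\beta+1}{2\beta-1}$ rather than merely $\Theta\!\bigl(\tfrac1{2\beta-1}\bigr)$. After dividing by $\tfrac{(k+1)n}{2}$ every correction — the $O(h)$ slack in the reduction, the error $E$, the floor in $m$, and the $(k+1-d)$ weights multiplying the $O(dm)$ terms — is $\Theta(1)$ and hence shifts the threshold in $k$ by an additive constant, which is exactly what separates $\frac{2\beta+1}{2\beta-1}$ from $\frac1{2\beta-1}$. I would control this by using $m\le n/k^{2}$ to bound each correction by a fixed multiple of $n$ with the correct sign, and by carrying the weights $(k+1-d)$ explicitly rather than hiding them in asymptotic notation.
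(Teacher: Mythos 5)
Your route is genuinely different from the paper's, and (perhaps surprisingly) it can be pushed all the way to the stated constant --- but only with exact bookkeeping, and the fallback you describe in your last paragraph would land strictly short. First the comparison. The paper applies Plotkin's bound to the $k+1$ long vectors obtained by shifting the prefix of $\seq s$ by multiples of $B=\lfloor n/k^{2}\rfloor$, truncates the closest pair to a multiple of its lag (a single \emph{multiplicative} loss $m/m'\le k/(k-1)$, which is exactly what turns Plotkin's $\frac12+\frac1{2k}$ into the threshold condition $\frac12+\frac1{k-1}\le\beta$, i.e.\ $k\ge\frac{2\beta+1}{2\beta-1}$), and then telescopes the truncated pair into consecutive blocks. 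Your proof replaces truncation-and-telescoping by per-lag windowing bounds on full-sequence autocorrelations fed into $\sum_i P_i^2\ge 0$; note that this quadratic form is precisely the column-sum positivity by which Plotkin's bound is proved, applied to the same spaced samples, so the two arguments share their engine and differ only in how ``no short repeat'' is converted into correlation information. Yours buys explicit quantitative bounds and avoids the truncation trick; the paper's buys constant-hygiene, because its only loss is multiplicative while yours produces \emph{additive} boundary debris at both ends of the sequence.

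That debris is where the danger lies: it consumes almost the entire budget. Your reduction gives, under the no-repeat hypothesis, $R_{dm}<(1-2\beta)n+(4\beta-1)dm$, and the boundary terms give $S_{\ell,\ell'}\le R_{dm}+(k-d)m$ for a pair at gap $d$ (the missing $\pm1$ terms can all have the bad sign), so
\[
0\le\sum_{i=1}^{n-km}P_i^{2}<(k+1)(n-km)+2\sum_{d=1}^{k}(k+1-d)\bigl[(1-2\beta)n+(4\beta-1)dm+(k-d)m\bigr].
\]
Dividing by $(k+1)n$ and using $m\le n/k^{2}$ (the net coefficient of $m$ is positive), the contradiction is obtained exactly when
\[
3(2\beta-1)k^{2}-4(\beta+1)k-(8\beta-7)\ \ge\ 0 .
\]
This quadratic is increasing in $k$ beyond its vertex $\frac{2(\beta+1)}{3(2\beta-1)}$, which lies below $\frac{2\beta+1}{2\beta-1}$, and at $k=\frac{2\beta+1}{2\beta-1}$ it equals $2(4-3\beta)>0$; hence your argument does prove the theorem for every $k\ge\frac{2\beta+1}{2\beta-1}$. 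But the margin, $\frac{2(4-3\beta)}{3k}$, is smaller than the individual corrections you are tempted to round away. If, as you propose, you bound ``each correction by a fixed multiple of $n$'' --- e.g.\ drop the diagonal deficit $-(k+1)km$ and replace $\frac{2(k-1)}{3k}$ by $\frac23$ --- the requirement becomes $3(2\beta-1)k^{2}-4(\beta+1)k-(8\beta-2)\ge0$, whose value at the threshold is $-3(2\beta-1)<0$: the rounded argument fails for \emph{every} $\beta\in(\frac12,1]$ at the threshold, and integrality does not save the cases where $\frac{2\beta+1}{2\beta-1}$ is an integer. Concretely, for $\beta=\frac35$, $k=11$ the rounded inequality would demand $2.2\ge\frac{6.4}{3}+\frac{2.8}{33}\approx2.218$. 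So: your idea is sound and distinct from the paper's, but the lower-order terms are not ``shift the threshold by an additive constant'' slack --- they are exact-or-bust. Keep the $-km$ terms and the true factors $\frac{k-1}{k}$ and $\frac{k+2}{k}$, or adopt the paper's truncation step instead.
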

\begin{proof}
	Let $k$ be a positive integer to be determined later and put $K=k^{2}$. Furthermore, let ${\seq s}'={\seq s}_{1}\dotsm {\seq s}_{K}$ be a partition of the first $KB$ symbols of $\seq s$ into blocks of length $B=\left\lfloor\frac{n}{K}\right\rfloor$. We now consider as a code~\cite{macwilliams1977theory} the $k+1$ binary vectors
\[
{\seq t}_{i}={\seq s}_{i}\dotsm {\seq s}_{i+K-k-1},\qquad\left(1\le i\le k+1\right),
\]
each of length $m=\left(K-k\right)B$. By Plotkin's bound~\cite[p.~41]{macwilliams1977theory}, the minimum Hamming distance of this code is at most $\left(\frac{1}{2}+\frac{1}{2k}\right)m$. Thus there exist ${\seq t}_{i}$ and ${\seq t}_{j}$ with $i<j$ with Hamming distance at most $\left(\frac{1}{2}+\frac{1}{2k}\right)m$.

Put $h=\left(j-i\right)B$ and let $m'=h\lfloor m/h\rfloor$ be the largest integer which is at most $m$ and is divisible by $h$. Let ${\seq t}'_{i}$ and ${\seq t}'_{j}$ consist of the first $m'$ bits of ${\seq t}_{i}$ and ${\seq t}_{j}$, respectively. The Hamming distance between ${\seq t}'_{i}$ and ${\seq t}'_{j}$ is clearly still at most $\left(\frac{1}{2}+\frac{1}{2k}\right)m$. But $\left(\frac{1}{2}+\frac{1}{2k}\right)m\le\left(\frac{1}{2}+\frac{1}{k-1}\right)m'$ since
\begin{align*}
\left(\frac{1}{2}+\frac{1}{2k}\right)m & =\left(\frac{1}{2}+\frac{1}{2k}\right)\frac{m}{m'}m'
\stackrel{(*)}{\le}
\left(\frac{1}{2}+\frac{1}{2k}\right)\frac{k}{k-1}m'
=
\left(\frac{1}{2}+\frac{1}{k-1}\right)m',
\end{align*}
where $(*)$ can be proved as follows. By the definition of $m'$, we have $m-m'<h$. Additionally, $h\le k B$ since $1\le i< j\le k+1$. So,  
\[
\frac{m-m'}B< k,
\] which since $B$ divides $m,m'$, implies $\frac{m-m'}B\le k-1$ and, in turn, $m'\ge m-\left(k-1\right)B=\left(k-1\right)^{2}B$. Hence $\frac m{m'} \le \frac{k(k-1)B}{(k-1)^2B}=\frac k{k-1}$.

Split ${\seq t}'_{i}$ and ${\seq t}'_{j}$ into blocks of length $h$ each: ${\seq t}'_{i}={\seq z}_{1}{\seq z}_{2}\cdots {\seq z}_{p}$, ${\seq t}'_{j}={\seq z}_{2}{\seq z}_{3}\cdots {\seq z}_{p}{\seq z}_{p+1}$, where $p=m'/h$. The Hamming distance between ${\seq t}'_{i}$ and ${\seq t}'_{j}$ is the sum of the Hamming distances between ${\seq z}_{q}$ and ${\seq z}_{q+1}$ as $q$ ranges from $1$ to $p$. Thus, by averaging, there exists an index $r$ so that the Hamming distance between ${\seq z}_{r}$ and ${\seq z}_{r+1}$ is at most $\left(\frac{1}{2}+\frac{1}{k-1}\right)h$. Putting $k\ge\frac{2\beta+1}{2\beta-1}$ so that $\frac{1}{2}+\frac{1}{k-1}\le\beta$ ensures that ${\seq z}_{r}{\seq z}_{r+1}$ is $\beta$-repeat of length $h=\left(j-i\right)B=\left(j-i\right)$$\lfloor n/K\rfloor$. \end{proof}

Let a $\beta_{h}$-repeat be a repeat of length $h$ with at most $h\beta_{h}$ mismatches, i.e., the two blocks are at Hamming distance at most $h\beta_h$. In the preceding theorems and their proofs, in principal, we do not need the maximum number of permitted mismatches to be a linear function of the length of the repeat, so we can apply the same techniques to $\beta_h$-repeats with nonlinear relationships:
\begin{restatable}{thm}{thmbetaabouthalf}
Let $\beta_{h}^{a}=\frac{1}{2}+\frac{1}{h^{a}}$, where $0<a<1$ is a constant, and let $f_{a}(n)$ be the smallest number $f$ such that any binary sequence of length $n$ can be deduplicated to a root in $f$ steps by deduplicating $\beta_{h}^{a}$-repeats. There exist positive constants $c_{2},c_{3}$ such that
\begin{equation}
f_{a}(n)\le c_{2}n^{2a/(1+a)}+c_{3}.\label{eq:f_a}
\end{equation}
\end{restatable}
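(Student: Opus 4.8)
The plan is to mimic the logarithmic upper-bound argument from Theorem~\ref{thm:betalessthanhalf} (the $\beta>1/2$ case), but to track carefully how the threshold $\beta_h^a = \tfrac12 + h^{-a}$ degrades the guarantee as the repeat length $h$ shrinks. The key structural input is an analogue of Theorem~\ref{thm:rep_beta}: in that theorem, to guarantee a $\beta$-repeat we needed $k \ge \tfrac{2\beta+1}{2\beta-1}$, because the Plotkin-based construction produces a repeat with mismatch fraction at most $\tfrac12 + \tfrac{1}{k-1}$. For a fixed $\beta_h^a$-budget we need $\tfrac12 + \tfrac{1}{k-1} \le \tfrac12 + h^{-a}$, i.e. $k - 1 \ge h^a$. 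Since the repeat length $h$ produced in a sequence of length $n$ is of order $n/k^2$, this couples $k$ and $h$: roughly, with blocks of length $B = \lfloor n/k^2\rfloor$ we obtain a $\beta_h^a$-repeat of length $h \approx n/k^2$ provided $k \gtrsim h^a \approx (n/k^2)^a$, which rearranges to $k \gtrsim n^{a/(1+2a)}$ and gives a repeat of length $h \approx n^{1/(1+2a)}$.

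First I would restate and lightly re-derive the Plotkin step of Theorem~\ref{thm:rep_beta} with $k$ allowed to depend on $n$, extracting the precise relationship: for $k = \lceil n^{a/(1+2a)}\rceil$ (or a constant multiple thereof), every binary sequence of length $n$ contains a $\beta_h^a$-repeat of length $h = \Theta(n^{1/(1+2a)})$. The arithmetic is the same as in the original proof — the $m'/m \le k/(k-1)$ correction and the averaging over the $p$ blocks ${\seq z}_q$ — the only difference is bookkeeping to confirm that with this super-constant choice of $k$ the mismatch fraction $\tfrac12 + \tfrac{1}{k-1}$ stays below $\tfrac12 + h^{-a}$, which is exactly the inequality $k-1 \ge h^a$ checked above. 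I would state this as a lemma (or fold it inline) giving the guaranteed repeat length $h(n) \ge c_4 \, n^{1/(1+2a)}$.

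Next I would set up the recursion. One $\beta_h^a$-deduplication of the guaranteed repeat reduces the length from $n$ to at most $n - h(n) \le n - c_4 n^{1/(1+2a)}$, at a cost of one step. Writing $g(n)$ for the number of steps, this gives a recurrence of the shape $g(n) \le g\big(n - c_4 n^{1/(1+2a)}\big) + 1$. The standard way to solve such a recurrence is to compare it with the integral $\int^n \frac{dx}{c_4 x^{1/(1+2a)}}$, whose antiderivative is a constant times $x^{1 - 1/(1+2a)} = x^{2a/(1+2a)}$. This integral test yields $g(n) = O\!\left(n^{2a/(1+2a)}\right)$; note $\tfrac{2a}{1+2a} = \tfrac{2a/(1+a)}{?}$ — I would double-check the exponent against the claimed $2a/(1+a)$, since the target exponent in the statement suggests the block length is taken as $B$ itself rather than $h$, so the repeat length guaranteed may be $\Theta(n^{1/(1+a)})$ rather than $n^{1/(1+2a)}$; I would reconcile this by choosing the block partition and the parameter $k$ so that $h \asymp B \asymp n/k$ with $k \asymp h^a$, giving $h \asymp n^{1/(1+a)}$ and exponent $1 - 1/(1+a) = a/(1+a)$ per the integral, doubling appropriately to match $2a/(1+a)$. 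Reconciling these exponents is the one place I expect genuine friction.

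The main obstacle is precisely this exponent bookkeeping: getting the coupling between $k$, $B$, and $h$ exactly right so that the guaranteed repeat length is large enough to force the stated exponent $\tfrac{2a}{1+a}$, while simultaneously keeping the mismatch fraction within the nonlinear budget $h^{-a}$. A secondary technical point is making the recursion rigorous: the naive recurrence $g(n) \le g(n - c_4 n^{1/(1+2a)}) + 1$ should be justified by an explicit inductive hypothesis $f_a(n) \le c_2 n^{2a/(1+a)} + c_3$ and verifying the inductive step via the mean value theorem (the decrement $c_4 n^{\theta}$ applied to the concave function $n \mapsto n^{2a/(1+a)}$ produces a drop of at least $1$ for large $n$, closing the induction), exactly mirroring how the $C\ln n$ bound was closed in the proof of Theorem~\ref{thm:betalessthanhalf}. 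Once the exponent is pinned down, the rest is routine and the constants $c_2, c_3$ can be absorbed to handle the finitely many small $n$.
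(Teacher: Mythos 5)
There is a genuine gap, and it sits exactly where you predicted friction: the coupling between $k$ and the repeat length $h$. The Plotkin/averaging argument of Theorem~\ref{thm:rep_beta} produces a repeat of length $h=\ell\lfloor n/k^{2}\rfloor$ for some \emph{uncontrolled} $\ell\in[k]$; nothing lets you pin $\ell=O(1)$. Since the budget $\beta_{h}^{a}=\tfrac12+h^{-a}$ \emph{shrinks} as $h$ grows, the Plotkin guarantee $\tfrac12+\tfrac{1}{k-1}$ must be checked against the \emph{longest} repeat the argument might return, $h\approx k\lfloor n/k^{2}\rfloor\approx n/k$, not the shortest one $\approx n/k^{2}$. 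This forces $k-1\ge (n/k)^{a}$, i.e.\ $k=\Theta\bigl(n^{a/(1+a)}\bigr)$, which is precisely the paper's choice $k=\bigl\lceil 2n^{a/(1+a)}\bigr\rceil$. Your first choice, $k=\lceil n^{a/(1+2a)}\rceil$, only enforces $k-1\ge (n/k^{2})^{a}$, so the lemma you want (``a $\beta_{h}^{a}$-repeat of length $\Theta(n^{1/(1+2a)})$'') is not established by the same arithmetic: if the averaging returns $\ell$ near $k$, the repeat found has mismatch fraction up to $\tfrac12+n^{-a/(1+2a)}$ while its budget is only $\tfrac12+n^{-a(1+a)/(1+2a)}$, which is strictly smaller for every $a>0$. (A red flag you could have caught: your exponent $2a/(1+2a)$ would prove a \emph{stronger} bound than the theorem claims.) Your second reconciliation, taking $h\asymp B\asymp n/k$ with $k\asymp h^{a}$, assumes exactly the control over $\ell$ that the argument lacks, and the closing step ``doubling appropriately to match $2a/(1+a)$'' has no mathematical content.

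The resolution is an asymmetry your proposal misses: with $k=\Theta\bigl(n^{a/(1+a)}\bigr)$, the mismatch budget is governed by the longest possible repeat ($\approx n/k=n^{1/(1+a)}$), while the guaranteed per-step length decrease is only the shortest possible repeat, $\lfloor n/k^{2}\rfloor=\Theta\bigl(n^{(1-a)/(1+a)}\bigr)$ (the case $\ell=1$; the paper uses $\tfrac15 n^{(1-a)/(1+a)}$). Feeding this decrement into your induction/integral test gives exponent
\[
1-\frac{1-a}{1+a}=\frac{2a}{1+a},
\]
exactly the claimed bound, with no doubling needed. Once this is corrected, the remainder of your plan --- the inductive hypothesis $f_{a}(i)\le c_{2}i^{2a/(1+a)}+c_{3}$ for $i<n$, closed via Bernoulli's inequality with $c_{2}$ chosen large enough to absorb the $+1$ --- coincides with the paper's proof.
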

\begin{proof} By making appropriate changes to the proof of Theorem~\ref{thm:rep_beta}, one can show that for $k=\left\lceil 2n^{a/(1+a)}\right\rceil$, every binary sequence of sufficiently long length $n$ contains a $\beta_{h}^{a}$-repeat of length $h=\ell\lfloor n/k^{2}\rfloor$, for some $\ell\in[k]$. To do so, we need to prove $\left(\frac{1}{2}+\frac{1}{k-1}\right)h\le\beta_h^ah$ for all $h$ of the form $h=\ell \lfloor n/k^2\rfloor$, $\ell\in[k]$. This holds since with the aforementioned value of $k$,
\begin{equation*}
\beta_{\ell\lfloor n/k^{2}\rfloor}^{a}=\frac{1}{2}+\frac{1}{\left(\ell\lfloor n/k^{2}\rfloor\right)^{a}}
\ge\frac{1}{2}+\frac{1}{\left(k\lfloor n/k^{2}\rfloor\right)^{a}}
\ge\frac{1}{2}+\frac{1}{k-1},
\end{equation*}
for all $\ell\in[k]$ and sufficiently large $n$. 

We can now prove~(\ref{eq:f_a}) by induction. Clearly, for any $M$, there exist constants $c_{2},c_{3}$ such that $f_{a}(i)\le c_{2}i^{2a/(1+a)}+c_{3}$ for all $i\le M$. Choose $M$ to be sufficiently large as to satisfy the requirements of the rest of the proof. Fix $n>M$ and assume that $f_{a}(i)\le c_{2}i^{2a/(1+a)}+c_{3}$ for all $i<n$. Since in every sequence of length $n$, there exists a $\beta_{h}^{a}$-repeat with $h=\ell\lfloor n/k^{2}\rfloor$, for some $\ell\in[k]$ and $k=\left\lceil 2n^{a/(1+a)}\right\rceil $, it holds that
\begin{align*}
f_{a}  (n)&\le1+c_{2}\left(n-\ell\lfloor n/k^{2}\rfloor\right)^{2a/(1+a)}+c_{3}\\
 & \le1+c_{2}\left(n-\frac{1}{5}n^{\frac{1-a}{1+a}}\right)^{2a/(1+a)}+c_{3}\\
 & =1+c_{2}n^{2a/(1+a)}\left(1-\frac{1}{5}n^{-\frac{2a}{1+a}}\right)^{2a/(1+a)}+c_{3}\\
 & \le1+c_{2}n^{2a/(1+a)}\left(1-\frac{2a}{5\left(1+a\right)}n^{-\frac{2a}{1+a}}\right)+c_{3}\\
 & =c_{2}n^{2a/(1+a)}+\left(1-\frac{2ac_{2}}{5\left(1+a\right)}\right)+c_{3}\\
 & \le c_{2}n^{2a/(1+a)}+c_{3},
\end{align*}
where the inequalities hold for sufficiently large $n$. The third inequality follows from Bernoulli's inequality and the the last one follows from the fact that we can choose $c_{2}$ to be arbitrarily large. \end{proof}

\section{Duplication distances for different roots\label{sec:stringSystems}}

In this section, we study $f_{\seq{\sigma}}$ for $\seq{\sigma}\in\{0,1,01,10,010,101\}$. 
It is easy to see that $f_{0}(n)=f_{1}(n)=\left\lceil \log_2 n\right\rceil .$ Clearly $f_{10}=f_{01}$ and $f_{101}=f_{010}$. So we limit our attention to roots $\seq \sigma=10$ and $\seq \sigma=101$. Plots for $f_{10}(n)$ and $f_{101}(n)$, obtained through computer search, are given in Figure~\ref{fig:f10-101(n)}.

\begin{figure}
{\begin{centering}
\includegraphics[width=3in]{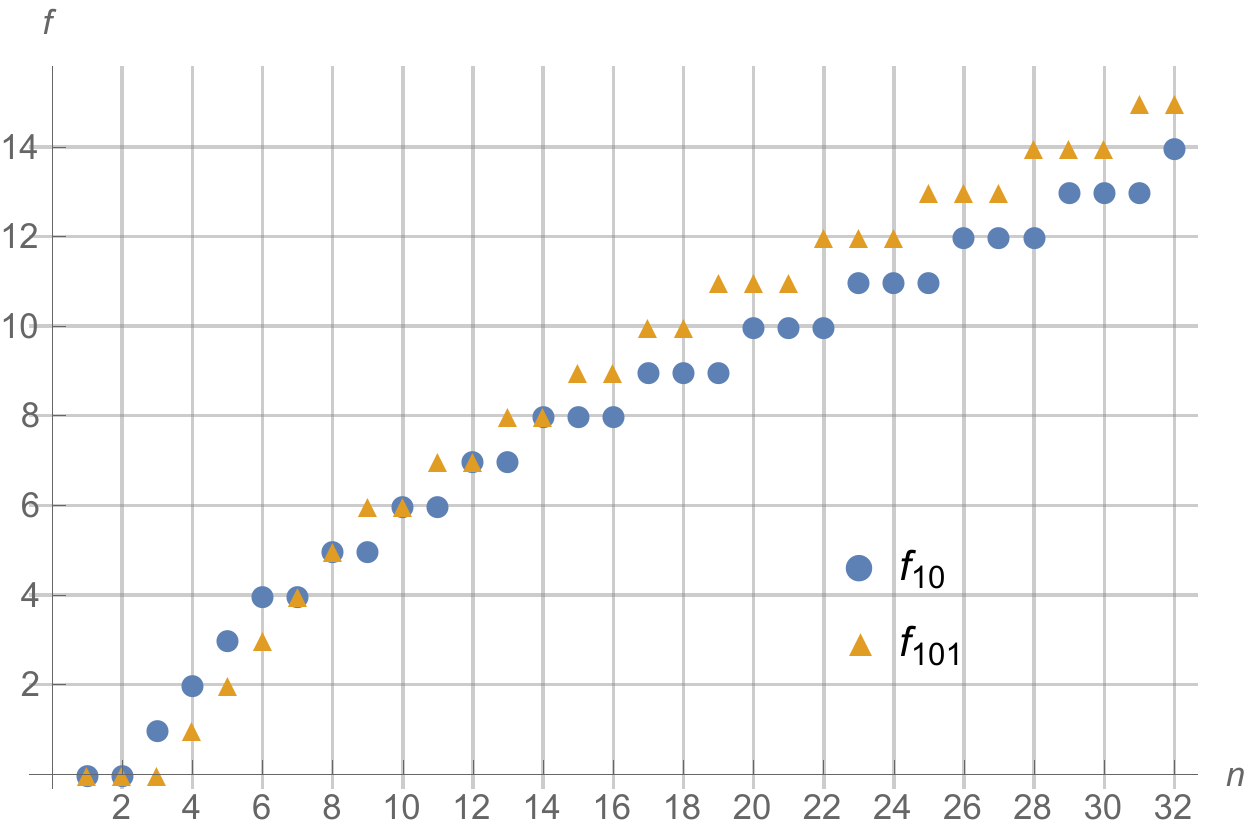}
\par\end{centering}
}
\caption{$f_{10}(n)$ and $f_{101}(n)$ for $1\le n\le32$.\label{fig:f10-101(n)}}
\end{figure}

\thmdiffrentroots*
\begin{proof}
The general approach in this proof is similar to that of the proof of Fekete's lemma in \cite{steel1997probability}. We prove the theorem for $\lim_{n}\frac{f_{10}(n)}{n}$. The proof for $\frac{f_{101}(n)}{n}$ is similar.

Let $\gamma=\liminf_{n}\frac{f_{10}(n)}{n}$ and let $k\ge3$ be such that $f_{10}(k)+5+2\log_2 k\le k\left(\gamma+\epsilon\right)$ for $\epsilon>0$. Let $\seq s$ be a sequence of length $n$. Starting from the beginning of $\seq s$, partition it into substrings that are the shortest possible while having length at least $k$ and different symbols at the beginning and the end (so that their root is either 10 or 01). Name these substrings ${\seq s}_{1},\dotsc,{\seq s}_{m+1}$, where $\left|{\seq s}_{i}\right|\ge k$ for $i\le m$ and $1\le\left|{\seq s}_{m+1}\right|\le k$. Let $s_{i,j}$ denote the $j$th element of ${\seq s}_{i}$. We deduplicate ${\seq s}$ to its root by first deduplicating its substrings ${\seq s}_{i}$ to their roots.

For each substring ${\seq s}_{i}$ of the partition, except the last one, we consider the following cases and deduplicate ${\seq s}_{i}$ as indicated, where without loss of generality we assume ${\seq s}_{i}$ starts with 1 and ends with 0:
\begin{itemize}
\item $\left|{\seq s}_{i}\right|=k$: Deduplicate this substring to 10 in $f_{10}(k)$ steps.
\item $\left|{\seq s}_{i}\right|>k$ and $s_{i,k-1}=1$: In this case, ${\seq s}_{i}=1\seq x11,\negthinspace1^{*}0$, where $\seq x\in\left\{ 0,1\right\} ^{k-3}$, for clarity a comma is placed after the $k$th element of ${\seq s}_{i}$, and $a^*$ denotes that the symbol $a$ appears 0 or more times. We reduce the length of the last run of 1s in ${\seq s}_{i}$ by $\left|{\seq s}_{i}\right|-k$ in $\left\lceil \log_2\left(\left|{\seq s}_{i}\right|-k+1\right)\right\rceil $ deduplication steps to obtain $1\seq x10$. Then deduplicate the result to 10 in $f_{10}(k)$ steps.
\item $\left|{\seq s}_{i}\right|>k$ and $s_{i,k-1}=0$: In this case, ${\seq s}_{i}=1\seq x01,\negthinspace1^{*}0$, where $\seq x\in\left\{ 0,1\right\} ^{k-3}$ and where a comma is placed after the $k$th element of ${\seq s}_{i}$. We reduce the length of the last run of 1s in ${\seq s}_{i}$ by $\left|{\seq s}_{i}\right|-k-1$ in $\left\lceil \log_2\left(\left|{\seq s}_{i}\right|-k\right)\right\rceil $ deduplication steps to obtain $\hat{\seq s}_{i}=1\seq x01,0$ and note that $\hat{\seq s}_{i}$ has length $k+1$ and ends with $010$. Now either $\hat{\seq s}_{i}$ has a run of length at least 2 or not. If it does, we reduce the length of this run by 1 to obtain a sequence of length $k$, which we then convert to $10$ in $f_{10}(k)$ deduplication steps. If not, then $\hat{\seq s}_{i}$ is an alternating sequence of the form $101010\dotsm10$ which can be deduplicated to $10$ in no more than $\left\lceil \log_2\frac{k+1}{2}\right\rceil $ steps.
\end{itemize}
The resulting sequences has length at most $2m+k$ and can be deduplicated to its root in at most as many steps. We thus have
\begin{align*}
f(n) & \le mf_{10}(k)+\sum_{i=1}^{m}\left\lceil \log_2\left(\left|{\seq s}_{i}\right|-k+1\right)\right\rceil +m\left\lceil \log_2\frac{k+1}{2}\right\rceil +3m+k\\
 & \le mf_{10}(k)+\sum_{i=1}^{m}\log_2\left|{\seq s}_{i}\right|+m\log_2 k+5m+k\\
 & \le\frac{n}{k}f_{10}(k)+\frac{2n}{k}\log_2 k+5\frac{n}{k}+k,
\end{align*}
 where for the last step we have used the fact that 
\[
\sum_{i=1}^{m}\log_2\left|{\seq s}_{i}\right|\le m\log_2\left(n/m\right)\le\frac{n}{k}\log_2 k
\]
which holds since $\sum_{i=1}^{m}\left|{\seq s}_{i}\right|\le n$, $\frac{d}{dm}m\log_2\frac{n}{m}>0$ and $m\le\frac{n}{k}$. It follows that
\[
\frac{f(n)}{n}\le\frac{f_{10}(k)}{k}+\frac{2\log_2 k}{k}+\frac{5}{k}+\frac{k}{n}\le\gamma+\epsilon+\frac{k}{n}\ .
\]
Taking $\lim$ of both sides and noting that $\epsilon>0$ is arbitrary proves that $\lim_n \frac {f(n)}{n}\le\liminf_n\frac{f_{10}(n)}n$. On the other hand, it is clear that $\limsup_{n}\frac{f_{10}(n)}{n}\le\lim_{n}\frac{f(n)}{n}$. Hence, $\lim_{n}\frac{f(n)}{n}=\lim_{n}\frac{f_{10}(n)}{n}$. Similar arguments hold for $f_{101}(n)$. \end{proof}

\section{Open Problems}\label{sec:conc}

We now describe some of the open problems related to extremal values of duplication distance to the root. First, the binary duplication constant, $\lim_{n}\frac{f(n)}{n}$ is unknown. It is also interesting to find bounds tighter than the one given in Theorem~\ref{thm:bounds}, namely $0.045\le\lim\frac{f(n)}{n}\le0.4$. Furthermore, although the lower bound $f(\seq s)\ge0.045n$ is valid for all but an exponentially small fraction of sequences of length $n$, we have not been able to find an explicit family of sequences whose distance is linear in $n$. A related problem to identifying sequences with large duplication distance is improving bounds on $f(\seq s)$ that depend on the structure of $\seq s$, such as the bound given in Lemma~\ref{lem:SeqDepBound}, relating $f(\seq s)$ to the number of distinct $k$-mers of $\seq s$.

While we showed in our study of approximate duplication that at $\beta=1/2$, $f_\beta(n)$ transitions from a linear dependence on $n$ to a logarithmic one, the behavior at $\beta=1/2$ is not known. Furthermore, we can alter the setting by decoupling duplications and substitutions, i.e., we generate the sequence through exact duplications and substitutions, possibly with limitations on the number of substitutions. We can then study the same problems as the ones we have in this paper as well as new problems, e.g., the minimum number substitutions required to generate the sequence via a logarithmic number of duplication steps.

A different strand of problems are algorithmic in nature, including designing an algorithm that can efficiently find or approximate the duplication distance to the root and provide a duplication process of the appropriate length. The computational complexity of these tasks is also not known. Similar questions may be asked for approximate duplication, or duplication along with substitution. These problems are important because of their potential application in determining the sequence of duplications and point mutations that may have resulted in a particular segment of an organism's DNA. 

\section*{Acknowledgment}
This work was supported in part by the NSF Expeditions in Computing
Program (The Molecular Programming Project), by a USA-Israeli
BSF grant 2012/107, by an ISF grant 620/13, and
by the Israeli I-Core program. 
\bibliographystyle{siam}
\bibliography{bib}

\begin{thebibliography}{10}

\bibitem{benson1999reconstructing}
{\sc G.~Benson and L.~Dong}, {\em Reconstructing the duplication history of a
  tandem repeat.}, in ISMB, 1999, pp.~44--53.

\bibitem{debruijn1946combinatorial}
{\sc N.~G. De~Bruijn}, {\em A combinatorial problem}, {Proceedings of the
  Section of Sciences of the Koninklijke Nederlandse Akademie van Wetenschappen
  te Amsterdam}, 49 (1946), pp.~758--764.
\newblock Available:
  \url{http://repository.tue.nl/415282b7-6c10-4b9f-9624-4437629cc621}.

\bibitem{elishco2016capacity}
{\sc O.~Elishco, F.~Farnoud, M.~Schwartz, and J.~Bruck}, {\em The capacity of
  some {P\'{o}lya} string models}, in Proc.\ {IEEE} Int.\ Symp.\ Information
  Theory (ISIT), Barcelona, Spain, July 2016, pp.~270--274.

\bibitem{entringer1974nonrepetitive}
{\sc R.~C. Entringer, D.~E. Jackson, and J.~Schatz}, {\em On nonrepetitive
  sequences}, J. Combinatorial Theory, Series A, 16 (1974), pp.~159--164.

\bibitem{farnoud2015stochastic}
{\sc F.~Farnoud, M.~Schwartz, and J.~Bruck}, {\em A stochastic model for
  genomic interspersed duplication}, in Proc. {IEEE} Int. Symp. Information
  Theory, Hong Kong, China, June 2015, pp.~904--908.

\bibitem{farnoud2016capacity}
{\sc F.~Farnoud, M.~Schwartz, and J.~Bruck}, {\em The capacity of
  string-duplication systems}, {IEEE} Trans.\ Information Theory, 62 (2016),
  pp.~811--824 (conference version appeared in Proc. of IEEE Int. Symp. on
  Information Theory (ISIT), Honolulu, HI, June--July 2014).

\bibitem{gascuel2005reconstructing}
{\sc O.~Gascuel, D.~Bertrand, and O.~Elemento}, {\em Mathematics of Evolution
  and Phylogeny}, Oxford University Press, Oxford, 2005.

\bibitem{jain2015capacity}
{\sc S.~Jain, F.~Farnoud, and J.~Bruck}, {\em Capacity and expressiveness of
  genomic tandem duplication}, in Proc. {IEEE} Int. Symp. Information Theory,
  Hong Kong, China, June 2015.

\bibitem{jain2016duplication}
{\sc S.~Jain, F.~Farnoud, M.~Schwartz, and J.~Bruck}, {\em
  Duplication-correcting codes for data storage in the dna of living
  organisms}, in Proc. {IEEE} Int.\ Symp.\ Information Theory (ISIT),
  Barcelona, Spain, July 2016, pp.~1028--1032.

\bibitem{lander2001initial}
{\sc E.~S. Lander, L.~M. Linton, B.~Birren, C.~Nusbaum, M.~C. Zody, J.~Baldwin,
  K.~Devon, K.~Dewar, M.~Doyle, W.~FitzHugh, et~al.}, {\em Initial sequencing
  and analysis of the human genome}, Nature, 409 (2001), pp.~860--921.

\bibitem{lsystems}
{\sc A.~Lindenmayer}, {\em Mathematical models for cellular interactions in
  development}, Theoretical Biology, 18 (1968), pp.~300--315.

\bibitem{macwilliams1977theory}
{\sc F.~J. MacWilliams and N.~J.~A. Sloane}, {\em The theory of error
  correcting codes}, {Elsevier/North-Holland Inc.}, New York, 1977.

\bibitem{DOL}
{\sc P.~Prusinkiewicz and A.~Lindenmayer}, {\em {The algorithmic beauty of
  plants}}, Springer--Verlag, 1990.

\bibitem{roth2006introduction}
{\sc R.~Roth}, {\em Introduction to coding theory}, Cambridge University Press,
  2006.

\bibitem{steel1997probability}
{\sc J.~M. Steele}, {\em Probability Theory and Combinatorial Optimization},
  Society for Industrial and Applied Mathematics, 1997.

\bibitem{tang2002zinc}
{\sc M.~Tang, M.~Waterman, and S.~Yooseph}, {\em Zinc finger gene clusters and
  tandem gene duplication}, Journal of Computational Biology, 9 (2002),
  pp.~429--446.

\bibitem{thue1906}
{\sc A.~Thue}, {\em {\"U}ber unendliche zeichenreihen}, Norske Vid. Selsk. Skr.
  I. Mat. Nat. Kl., Christiana,  (1906).

\end{thebibliography}

\end{document}